\newcommand{\alleven}{\textup{All\-Cycles\-Even}\xspace}
\newcommand{\limsupodd}{\textup{Limsup\-Odd}\xspace}
\newtheorem{theorem}{Theorem}
\newtheorem{lemma}[theorem]{Lemma}
\newtheorem{claim}[theorem]{Claim}
\newtheorem{corollary}[theorem]{Corollary}
\title{Succinct progress measures for solving parity games\thanks{This
    research has been supported by the EPSRC grant EP/P020992/1
    (Solving Parity Games in Theory and Practice).}} 
\author{Marcin Jurdzi\'nski \qquad Ranko Lazi\'c \\[1ex]
  DIMAP, Department of Computer Science \\
  University of Warwick, UK}
\begin{document}

\date{}

\maketitle

\begin{abstract}
  The recent breakthrough paper by Calude et al.\ has given the first 
  algorithm for solving parity games in quasi-polynomial time, where
  previously the best algorithms were mildly subexponential. 
  We devise an alternative quasi-polynomial time algorithm based on
  progress measures, which allows us to reduce the space required
  from quasi-polynomial to quasi-linear.
  Our key technical tools are a novel concept of ordered tree coding, 
  and a succinct tree coding result that we prove using bounded
  adaptive multi-counters, both of which are interesting in their own
  right.
\end{abstract}

\section{Introduction}

\subsection{Parity games}

A \emph{parity game} is a deceptively simple combinatorial game played 
by two players---Even and Odd---on a directed graph.
From the starting vertex, the players keep moving a token along edges 
of the graph until a lasso-shaped path is formed, that is the first
time the token revisits some vertex, thus forming a loop.   
The set of vertices is partitioned into those owned by Even and those
owned by Odd, and the token is always moved by the owner of the vertex
it is on.
Every vertex is labelled by a positive integer, typically called its 
\emph{priority}. 
What are the two players trying to achieve?
This is the crux of the definition: 
they compete for the highest priority that occurs 
on the loop of the lasso;
if it is even then Even wins, and if it is odd then Odd wins. 

A number of variants of the algorithmic problem of
\emph{solving parity games} are considered in the literature.
The input always includes a game graph as described above. 
The \emph{deciding the winner} variant has an additional part of the
input---the starting vertex---and the question to answer is whether or
not Even has a \emph{winning strategy}---a recipe for winning no
matter what choices Odd makes.
Alternatively, we may expect that the algorithm returns the set of
starting vertices from which Even has a winning strategy, or that it
returns (a representation of) a winning strategy itself; 
the former is referred to as \emph{finding the winning positions}, and
the latter as \emph{strategy synthesis}.

A fundamental result for parity games is
\emph{positional determinacy}~\cite{EJ91,Mos91}:
each position is either winning for Even or winning for Odd, and 
each player has a positional strategy that is winning for her from
each of her winning positions. 
The former is straightforward because parity games---the way we
defined them here---are finite games, but the latter is non-trivial.  
When playing according to a \emph{positional strategy}, in every
vertex that a player owns, she always follows the same outgoing edge,
no matter where the token has arrived to the vertex from.
The answer to the strategy synthesis problem typically is in the form
of a positional strategy succinctly represented by a set of edges: 
(at least) one edge outgoing from each vertex owned by Even. 

Throughout the paper, we write $V$ and $E$ for the sets of vertices
and edges in a parity game graph and $\pi(v)$ for the
(positive integer) priority of a vertex~$v \in V$.
We also use~$n$ to denote the number of vertices;
$\eta$ to denote the numbers of vertices with an odd priority; 
$m$ for the number of edges; and $d$ for the smallest even number that 
is not smaller than the priority of any vertex.
We say that a cycle is \emph{even} if and only if the highest priority
of a vertex on the cycle is even.
We will write $\lg x$ to denote $\log_2 x$, and $\log x$ whenever the
base of the logarithm is moot. 

Parity games are fundamental in logic and verification because
they capture---in an easy-to-state combinatorial game
form---the intricate expressive power of nes\-ting least and greatest 
fixpoint operators (interpreted over appropriate complete lattices),
which play a central role both in the theory and in the practice of
algorithmic verification.    
In particular, the \emph{modal $\mu$-calculus model checking} problem  
is polynomial-time equivalent to solving parity
games~\cite{EJS93}, but parity games are much more broadly
applicable to a multitude of modal, temporal, and
\emph{fixpoint logics}, and in the theory of
\emph{automata on infinite words and trees}~\cite{Gra07}.   

The problem of solving parity games has been found to be both in NP
and in coNP in the early 1990's~\cite{EJS93}.
Such problems are said to be \emph{well characterised}~\cite{Joh07}
and are considered very unlikely to be NP-complete.
Parity games share the rare complexity-theoretic status of being well 
characterised, but not known to be in~P, with such prominent problems
as \emph{factoring}, \emph{simple stochastic games}, and
\emph{mean-payoff games}~\cite{Joh07}.   
Earlier notable examples include \emph{linear programming} and 
\emph{primality}, which were known to be well characterised for many
years before breakthrough polynomial-time algorithms were developed
for them in the late 1970's and the early 2000's, respectively.  

After decades of algorithmic improvements for the modal mu-calculus
model checking~\cite{EL86,BCJLM97,Sei96} and for solving parity 
games~\cite{Jur00,JPZ08,Sch07,CHL15,MRR16}, a recent breakthrough came
from Calude et al.~\cite{CJKLS17} who gave the first algorithm that
works in quasi-polynomial time, where the best upper bounds known
previously were subexponential of the form
$n^{O(\sqrt{n})}$~\cite{JPZ08,MRR16}.
Remarkably, Calude et al.\ have also established fixed parameter
tractability for the key parameter~$d$---the number of distinct vertex  
priorities.   

\subsection{Progress measures}

Our work is inspired by the \emph{succinct counting} technique of
Calude et al.~\cite{CJKLS17}, but it is otherwise rooted in earlier
work on \emph{rankings} and
\emph{progress measures}~\cite{EJ91,KK95,Var96}, and in particular it
is centered on their uses for algorithmically solving games on finite 
game graphs~\cite{Jur00,PP06,Sch07}.    

What is a progress measure?
Paraphrasing Klarlund's~\cite{Kla91,KK95,Kla94a,Kla94c} ideas, 
Vardi~\cite{Var96} coined the following slogans:
\begin{quote}
  A \emph{progress measure} is a mapping on program states that
  quantifies how close each state is to satisfying a property about
  infinite computations.
  On every program transition the progress measure must change in a
  way ensuring that the computation converges toward the property.
\end{quote}
Klarlund and Kozen~\cite{KK95} point out that:
\begin{quote}
  [existence of progress measures] is not surprising from a
  recursion-theoretic point of view [and it] is in essence expressed
  by the Kleene-Suslin Theorem of descriptive set theory,  
\end{quote}
justifying Vardi's~\cite{Var96} admonishment that:
\begin{quote}
  the goal of research in this area should not be merely to prove
  existence of progress measures, but rather to prove the existence of
  progress measures with some \emph{desirable} properties.
\end{quote}
For example, Klarlund~\cite{Kla91,Kla94a}, as well as Kupferman and 
Vardi~\cite{KV98} considered (appropriate relaxations of) progress
measures on infinite graphs and applied them to complementation and
checking emptiness of automata on infinite words and trees.   
Jurdzi\'nski~\cite{Jur00}, Piterman and Pnueli~\cite{PP06}, and
Schewe~\cite{Sch07} focused instead on optimising the magnitude of  
progress measures for Mostowski's parity conditions~\cite{Mos84} and
for Rabin conditions~\cite{KK95} on finite graphs in order to
improve the complexity of solving games with parity, Rabin, and
Streett winning conditions.

In the case of parity games, this allowed Jurdzi\'nski~\cite{Jur00} to
devise the \emph{lifting algorithm} that works in time $n^{d/2+O(1)}$,
where $n$ is the number of vertices and $d$ is the number of distinct
vertex priorities.  
Schewe~\cite{Sch07} improved the running time to $n^{d/3+O(1)}$ by
combining the divide-and-conquer \emph{dominion} technique of
Jurdzi\'nski et al.~\cite{JPZ08} with a modification of the lifting
algorithm, using the latter to detect medium-sized dominions more
efficiently.  

\subsection{Our contribution}

We follow the work of Jurdzi\'nski~\cite{Jur00} and
Schewe~\cite{Sch07} who have developed efficient algorithms for
solving parity games by proving existence of
\emph{small progress measures}.
Our contribution is to prove that every progress measure on a finite 
game graph is---in an appropriate sense---equivalent to a
\emph{succinctly} represented progress measure.  
This paves the way to the design of an algorithm that slightly
improves the quasi-polynomial time complexity of the algorithm of
Calude et al.~\cite{CJKLS17}, and that significantly improves the
space complexity from quasi-polynomial down to quasi-linear.  

More specifically and technically, we argue that
\emph{navigation paths} from the root to nodes in ordered trees of 
height~$h$ and with at most~$n$ leaves can be succinctly encoded using
at most approximately $\lg h \cdot \lg n$ bits by means of  
\emph{bounded adaptive multi-counters}.
The statement and the proof of this \emph{tree coding} result are
entirely independent of parity games, and they are notable in their
own right.
The concept of ordered tree coding that we introduce seems fundamental
and it may find unrelated applications.  

Our application of the tree coding result to parity games is based on
the fact that a progress measure for a graph with~$n$ vertices and~$d$  
distinct vertex priorities can be viewed as a labelling of vertices
by (the navigation paths from the root to) leaves of an ordered 
tree of height~$d/2$ and with at most~$n$ leaves.
It then follows that there are approximately at most
$2^{\lg d \cdot \lg n} = n^{\lg d}$ 
possible encodings to consider for every vertex,
a considerable gain over the naive bound
$2^{d/2 \cdot \lg n} = n^{d/2}$ that determined the complexity of
Jurdzi\'nski's~\cite{Jur00} algorithm.
We argue, however, that the lifting technique developed by
Jurdzi\'nski~\cite{Jur00} can be adapted to iteratively compute a
succinct representation of a progress measure in the quasi-polynomial
time $O\left(n^{\lg d}\right)$ and quasi-linear space
$O(n \log n \cdot \log d)$.    

\subsection{Related work}

The high-level idea of the algorithm of Calude et al.~\cite{CJKLS17} 
bears similarity to the approach of Bernet et al.~\cite{BJW02}:
first devise a finite safety automaton that recognizes infinite
sequences of priorities that result in a win for Even
(in the case of Bernet et al., given an explicit upper bound on the
number of occurrences of each odd priority before an occurrence of a
higher priority),
and then solve the safety game obtained from the product automaton
that simulates the safety automaton on the game graph. 

The key innovation of Calude et al.\ is their
\emph{succinct counting technique} which allows them to devise a
finite (safety) automaton
(not made explicit, but easy to infer from their work) with only
$n^{O(\log d)}$ states, while that of Bernet et al.\ may have
$\Omega\left((n/d)^{d/2}\right)$ states.
On the other hand, Calude et al.\ construct the safety game explicitly
before solving it, thus requiring not only quasi-polynomial time but
also quasi-polynomial space, and not just in the worst case but
always.  
In contrast, Bernet et al.\ develop a technique for solving the
safety game symbolically without explicitly constructing it, hence
avoiding superpolynomial space complexity;
as they point out: ``The algorithm actually turns out to be the same
as [the lifting] algorithm of Jurdzinski''~\cite{BJW02}
(although, in fact, they bring down rather than lift up).

Contemporaneously and independently from the early version of our
work~\cite{JL17}, Fearnley et al.~\cite{FJSSW17} have developed a
technique of lifting Calude et al.'s~\cite{CJKLS17}
\emph{play summaries} so as to efficiently solve Calude et al.'s
safety game without constructing it explicitly, and Gimbert and
Ibsen-Jensen~\cite{GIJ17} have given slightly improved upper bounds on
the running time of Calude et al.'s algorithm. 
While our succinct progress measures and bounded adaptive
multi-counters are notably different from Calude et al.'s and Fearnley
et al.'s play summaries, the complexity bounds achieved by us, by
Fearnley et al., and by Gimbert and Ibsen-Jensen are remarkably
similar.   
For the benchmark case when $d \leq \lg n$, Calude et al.\ gave the
$O(n^5)$ upper bound on the running time, and our $O(mn^{2.38})$ bound
is slightly better than the $O(mn^{2.55})$ improved bound derived for
Calude et al.'s algorithm by Gimbert and Ibsen-Jensen.
In the general case, Calude et al.\ gave the $O(n^{\lg d + 6})$
upper bound on the running time of their algorithm for finding the
winning positions and $O(n^{\lg d + 7})$ for strategy synthesis. 
For the case $d = \omega(\log n)$, we establish the
$O\left(dm\eta^{\lg(d/{\lg \eta})+1.45}\right)$ running time upper
bound, which is roughly the same as the one obtained by Fearnley et
al., and Gimbert and Ibsen-Jensen achieve the analogous
$O\left(dmn^{\lg(d/{\lg n})+1.45}\right)$ upper bound for Calude et
al.'s algorithm. 
Notably, however, both Fearnley et al.\ and Gimbert and Ibsen-Jesen
claim those bounds only for the cases $d \geq \log^2 \eta$ and 
$d = \Omega(\log^2 n)$, respectively.  

Boja\'nczyk and Czerwi\'nski~\cite[Chapter~3]{BC18} have recently developed
a reworking of the algorithm of Calude et al.~\cite{CJKLS17} that is based on
constructing a deterministic safety automaton of quasi-polynomial size that separates
the language of all infinite words of vertices in which all cycles are even 
from its odd counterpart.
We supplement the main results in this paper by developing separating automata 
of quasi-polynomial size that are based on the bounded adaptive multi-counters.
They seem significantly different from the separating automata of 
Calude et al.\ cum Boja\'nczyk and Czerwi\'nski which are based on play summaries.
Moreover, both the construction and the proof of correctness 
are perhaps surprisingly simple.

\section{Succinct tree coding}
\label{section:tree-coding}

What is an \emph{ordered tree}?
One formalisation is that it is a prefix-closed set of sequences of
elements of a linearly ordered set. 
For clarity, in contrast to graphs, we refer to those sequences as
\emph{nodes}, and the maximal nodes (w.r.t.\ the prefix ordering) are
called \emph{leaves}.   
The \emph{root} of the tree is the empty sequence, sequences of
length~$1$ are the \emph{children} of the root, sequences of
length~$2$ are their children, and so on.
We also refer to the elements of the linearly ordered set that occur
in the sequences as \emph{branching directions}: 
for example, if we use the non-negative integers with the usual
ordering as branching directions, then the node $(3, 0, 5)$ is the
child of the node $(3, 0)$ reached from it via branching
direction~$5$.  
Moreover, we refer to the sequences of branching directions
that uniquely identify nodes as their \emph{navigation paths}.  

What do we mean by \emph{ordered-tree coding}?
The notion we find useful in the context of this work is an
order-preserving relabelling of branching directions, allowing for the
relabellings at various nodes to differ from one another
(or, in other words, to be \emph{adaptive}).
The intention when coding in this way is to obtain an isomorphic
ordered tree, and the intended purpose is to be able to more
succinctly encode the navigation paths for each leaf in the tree.

Succinct codes are easily obtained if trees are well balanced.
As a warm-up, consider the ordered tree of height~$1$ and with~$\ell$
leaves 
(that is, the tree consists of the root whose $\ell$ children are all
leaves).  
Whatever the (identity of the) branching directions from the root to
the~$\ell$ leaves are, for every~$i = 0, 1, \dots, \ell-1$, we can
relabel the branching direction of the $i$-th
(according to the linear order on branching directions) child of the
root to be the binary representation of the number~$i$, which shows
that the navigation path of every leaf in the tree can be described
using only $\lceil \lg \ell \rceil$ bits.   
The reader is invited to verify that increasing height while
maintaining balance of such ordered trees does not (much) increase the
number of bits (expressed as a function of the number of leaves)
needed to encode the navigation paths.
Consider for example the case of a perfect $k$-ary tree of
height~$h$; 
it has $\ell = k^h$ leaves and every navigation path can be encoded by
$h \cdot \lceil \lg k \rceil$ bits, $\lceil \lg k \rceil$ bits
per each $k$-ary branching;
argue that it is bounded by $2 \lg \ell$ for all $k \geq 2$, and is in
fact $(1+o(1)) \cdot \lg \ell$. 

But what if the tree is not nearly so well balanced?
How many more bits may be needed to encode navigation paths in
arbitrary trees of height~$h$ and with~$\ell$ leaves?
The key technical result of this section, on which the main results of 
the paper hinge, is that---thanks to adaptivity of our notion of
ordered-tree coding---$(\lceil \lg h \rceil + 1) \lceil \lg \ell \rceil$
bits suffice. 

We define the set $B_{g, h}$ of
\emph{$g$-bounded adaptive $h$-counters}
to consist of $h$-tuples of binary strings whose total length is at
most~$g$.
For example, $(0, \varepsilon, 1, 0)$ and
$(\varepsilon, 1, \varepsilon, 0)$ are $3$-bounded adaptive
$4$-counters,
but $(0, 1, \varepsilon, \varepsilon, 0)$ and
$(10, \varepsilon, 01, \varepsilon)$ are not---the former is a
$5$-tuple, and the total length of the binary strings in the latter
is~$4$. 

We define a strict linear ordering~$<$ on finite binary strings
as follows, for both binary digits $b$, and for all binary strings~$s$
and~$s'$:
\begin{equation}
  \label{equation:bin-str-order}
  0 s < \varepsilon, \qquad
  \varepsilon < 1 s, \qquad
  b s < b s' \;\;\text{iff}\;\; s < s'.
\end{equation}
Equivalently, it is the ordering on the rationals obtained by the mapping
\[b_1 b_2 \cdots b_k \,\mapsto\,
  \sum_{i = 1}^k (-1)^{b_i + 1} 2^{-i}\;.\]

We extend the ordering to $B_{g, h}$ lexicographically.
For example, $(00, \varepsilon, 1) < (0, 0, 0)$ because $00 < 0$, and
$(\varepsilon, 011, 1) < (\varepsilon, \varepsilon, 000)$ because
$011 < \varepsilon$. 
 
\begin{lemma}[Succinct tree coding]
  \label{lemma:succinct-tree-coding}
  For every ordered tree of height~$h$ and with at most~$\ell$ leaves 
  there is a tree coding in which every navigation path is an 
  $\lceil \lg \ell \rceil$-bounded adaptive $i$-counter, where
  $i \leq h$ is the length of the path. 
\end{lemma}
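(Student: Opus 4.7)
The plan is to handle each internal node of the tree independently: at every internal node, assign new binary-string labels to its outgoing edges in an order-preserving way, and then analyse the resulting navigation paths. The key observation is that the ordering on binary strings described by~\eqref{equation:bin-str-order} is precisely the in-order traversal of the infinite binary tree in which $\varepsilon$ sits at the root and the strings $0s$ and $1s$ are respectively the left and right children of the string~$s$. Hence, order-preservingly relabelling the $k$ children of an internal node by binary strings is equivalent to choosing a \emph{binary search tree} (BST) on those $k$ keys whose in-order traversal agrees with their original order, and the code assigned to each child is the root-to-key path in this BST.

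I would use the subtree leaf-counts as weights and build the BST by recursive \emph{weighted-median split}: at each level, choose as the root the smallest-indexed child $c_j$ for which $\sum_{i \le j} n_i > \lfloor n/2 \rfloor$, where $n = \sum_i n_i$ is the total weight at this level. This ensures that both partial sums $L = \sum_{i<j} n_i$ and $R = \sum_{i>j} n_i$ are at most $\lfloor n/2 \rfloor$. A short induction on~$k$ then shows that in the resulting BST each key $c_i$ sits at depth $d_i \le \lfloor \lg(n/n_i) \rfloor$: the inductive bound $d_i \le 1 + \lfloor \lg(L/n_i) \rfloor$ (or similarly with $R$) combined with $L \le n/2$ and the integrality of $\lfloor \cdot \rfloor$ delivers the claim.

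The lemma itself would then be proved by well-founded induction on the (sub)tree. If the root has children $c_1, \dots, c_k$ whose subtrees contain $n_1, \dots, n_k$ leaves and the whole tree has $n = \sum_i n_i$ leaves, the inductive hypothesis applied to each subtree provides a coding under which every leaf of the subtree at $c_i$ has path-bit-length at most $\lceil \lg n_i \rceil$; combining this with the BST edge-code of length $d_i$ yields a total of at most $\lfloor \lg(n/n_i) \rfloor + \lceil \lg n_i \rceil$, which is bounded by $\lceil \lg n \rceil$ via the elementary arithmetic fact that $\lfloor x - y \rfloor + \lceil y \rceil \le \lceil x \rceil$ for all reals $x \ge y \ge 0$ (easily verified by splitting into integer and fractional parts), instantiated with $x = \lg n$ and $y = \lg n_i$. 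The companion bound $i \le h$ on the counter length is immediate, since the length of any navigation path equals the depth of the corresponding leaf.

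The main obstacle is making the floors and ceilings cooperate: the more obvious budget $\lceil \lg(n/n_i) \rceil$ on the BST edge-code does \emph{not} telescope with $\lceil \lg n_i \rceil$ into $\lceil \lg n \rceil$, because the two ceilings may each round up. It is therefore crucial that the BST construction delivers the sharper \emph{floor} bound $\lfloor \lg(n/n_i) \rfloor$ on depths, which in turn requires the weighted-median inequality $L, R \le n/2$ rather than merely $L, R < n$.
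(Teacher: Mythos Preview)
Your proof is correct and rests on the same core idea as the paper's: at each internal node, build a weight-balanced binary search tree on its children (weights being the subtree leaf-counts) and use the BST root-to-key paths as the new branching-direction codes.  The organisation differs.  The paper runs a combined double induction on $\lceil \lg \ell \rceil$ and $h$: pick the weighted-median child $M$, code it by $\varepsilon$ and recurse on the subtree at $M$ (decrement $h$); recurse on the trees restricted to $L_<$ and $L_>$ after prepending $0$ or $1$ to the first code (each has at most $\ell/2$ leaves, so the bit budget $\lceil \lg \ell \rceil$ drops by exactly one).  Unfolding the $\lceil \lg \ell \rceil$-recursion at a fixed node \emph{is} precisely your BST construction, and the ``drops by exactly one'' accounting means no floor/ceiling arithmetic is ever needed.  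Your version instead separates the two inductions: first build the entire BST at the root and prove the per-child depth bound $d_i \le \lfloor \lg(n/n_i) \rfloor$, then recurse on subtrees and telescope via $\lfloor \lg(n/n_i) \rfloor + \lceil \lg n_i \rceil \le \lceil \lg n \rceil$.  This is a little more bookkeeping, but it makes the local BST structure---and the reason the \emph{floor} bound is essential, as you note---completely explicit.
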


\begin{proof}
We argue inductively on~$\lceil \lg \ell \rceil$ and~$h$.

The base case, $\ell = 1$ and $h = 0$, is trivial.

Let $M$ be a branching direction from the root 
such that both sets of leaves:  
$L_<$ whose first branching direction (i.e., from the root) 
is strictly smaller than~$M$, and 
$L_>$ whose first branching direction is strictly larger than~$M$, 
are of size at most~$\ell/2$.
Also let $L_=$ to be the set of leaves whose first branching direction is~$M$. 
The required coding is obtained in the following way:
\begin{itemize}
\item
  If $L_< \not= \emptyset$, apply the inductive hypothesis to the
  subtree with $L_<$ as the set of leaves, and append one leading~$0$
  to the binary strings that code the first branching direction. 
\item
  If $L_> \not= \emptyset$, apply the inductive hypothesis to the
  subtree with $L_>$ as the set of leaves, and append one leading~$1$
  to the binary strings that code the first branching direction. 
\item
  Let the empty binary string $\varepsilon$ be the code of the
  branching direction~$M$ from the root of the tree, and then obtain
  the required coding of the rest of the subtree rooted at node $(M)$
  by applying the inductive hypothesis for trees of height at
  most~$h-1$ and with at most $\ell$ leaves.
  \qedhere
\end{itemize}
\end{proof}

The lemma is illustrated, for an ordered tree of height $2$ and with $8$ leaves,
in Figures \ref{figure:small} and~\ref{figure:succinct}.

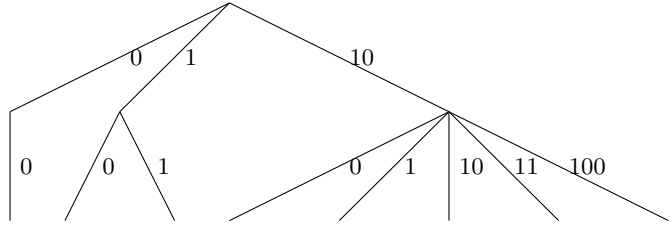
\begin{figure}
\begin{center}
\begin{tikzpicture}
  [ grow                    = down,
    sibling distance        = 3.75em,
    level distance          = 3.75em,
    edge from parent/.style = {draw, right, font=\footnotesize} ]
  \coordinate
  child { child { edge from parent node {$0$} } 
          edge from parent node {$0$} }
  child { child { edge from parent node {$0$} }
          child { edge from parent node {$1$} }
          edge from parent node {$1$}}
  child[missing]
  child[missing]
  child { child { edge from parent node {$0$} }
          child { edge from parent node {$1$} }
          child { edge from parent node {$10$} }
          child { edge from parent node {$11$} }
          child { edge from parent node {$100$} } 
          edge from parent node {$10$} };
\end{tikzpicture}
\end{center}
\caption{The branching directions are simply numbered in binary.
For instance, the navigation path to the right-most leaf is $(10,100)$, 
which uses $5$ bits.}
\label{figure:small}
\end{figure}

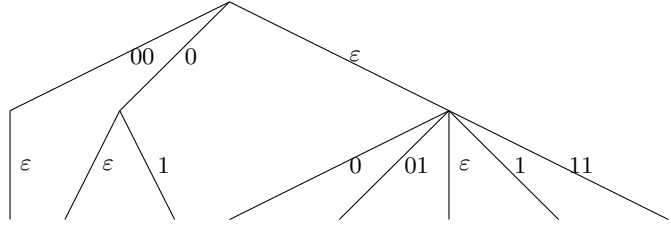
\begin{figure}
\begin{center}
\begin{tikzpicture}
  [ grow                    = down,
    sibling distance        = 3.75em,
    level distance          = 3.75em,
    edge from parent/.style = {draw, right, font=\footnotesize} ]
  \coordinate
  child { child { edge from parent node {$\varepsilon$} } 
          edge from parent node {$00$} }
  child { child { edge from parent node {$\varepsilon$} }
          child { edge from parent node {$1$} }
          edge from parent node {$0$}}
  child[missing]
  child[missing]
  child { child { edge from parent node {$0$} }
          child { edge from parent node {$01$} }
          child { edge from parent node {$\varepsilon$} }
          child { edge from parent node {$1$} }
          child { edge from parent node {$11$} } 
          edge from parent node {$\varepsilon$} };
\end{tikzpicture}
\end{center}
\caption{The same ordered tree is coded succinctly so that, 
in every navigation path, the total number of bits is at most 
$\lceil \lg 8 \rceil = 3$ (in this example, it happens to be at most~$2$).}
\label{figure:succinct}
\end{figure}

Note that the tree coding lemma implies the promised 
$(\lceil \lg h \rceil + 1) \lceil \lg \ell \rceil$ upper bound:
every $\lceil \lg \ell \rceil$-bounded adaptive $h$-counter can be
coded by a sequence of $\lceil \lg \ell \rceil$ single bits, each
followed by the $\lceil \lg h \rceil$-bit representation of the
number of the component that the single bit belongs to 
in the bounded adaptive multi-counter. 
In Section~\ref{section:lifting-algorithm} we give more refined
estimates of the size of the set
\[
S_{\eta, d} =
\bigcup_{i=0}^{d/2} B_{\lceil \lg \eta \rceil, i}
\]
of $\lceil \lg \eta \rceil$-bounded adaptive $i$-counters,
where $0 \leq i \leq d/2$, which is the dominating term in the
worst-case running time bounds of our lifting algorithm for solving
parity games.

\section{Succinct progress measures}
\label{section:spm}

\subsection{Progress measures}

For finite parity game graphs, a progress measure~\cite{Jur00} is a
mapping from the~$n$ vertices to $d/2$-tuples of non-negative integers
(that also satisfies the so-called progressiveness conditions on an 
appropriate set of edges, as detailed below).  
Note that an alternative interpretation is that a progress measure
maps every vertex to a leaf in an ordered tree~$T$
in which each of the at most $n$ leaves has a navigation path of
length~$d/2$. 

What are the conditions that such a mapping needs to satisfy to be a
progress measure?
For every priority $p \in \{1, 2, \dots, d\}$, we obtain the
\emph{$p$-truncation} $(r_{d-1}, r_{d-3}, \dots, r_1)|_p$ of the 
$d/2$-tuple $(r_{d-1}, r_{d-3}, \dots, r_1)$ of non-negative integers,
one per each odd priority, by removing the components corresponding to
all odd priorities~$i$ lower than~$p$. 
For example, we have $(2, 7, 1, 4)|_8 = \varepsilon$,
$(2, 7, 1, 4)|_5 = (2, 7)$ and $(2, 7, 1, 4)|_2 = (2, 7, 1)$. 
We compare tuples using the lexicographic order.
We say that an edge $(v, u) \in E$ is \emph{progressive} in~$\mu$ if 
\[
\mu(v)|_{\pi(v)} \geq \mu(u)|_{\pi(v)}\,,
\]
and the inequality is strict when $\pi(v)$ is odd.
Finally, the mapping $\mu : V \to T$ is a
\emph{progress measure}~\cite{Jur00} if:   
\begin{itemize}
\item
  for every vertex owned by Even, some outgoing edge is progressive
  in~$\mu$, and
\item
  for every vertex owned by Odd, every outgoing edge is progressive
  in~$\mu$. 
\end{itemize}

\subsection{Trimmed progress measures}

Observe that the progressiveness condition of every edge
$(v, u) \in E$ is formulated by referring to the $\pi(v)$-truncations
of the tuples labelling vertices~$v$ and~$u$, so if the label of~$v$
is $(r_{d-1}, r_{d-3}, \dots, r_1)$ then the components $r_i$ for
$i < \pi(v)$ are superfluous for stating the condition.
It is therefore reasonable to consider
\emph{trimmed progress measures} that label vertices with tuples
$(r_{d-1}, r_{d-3}, \dots, r_{k+2}, r_k)$ of length at most $d/2$,
rather than insisting on all vertices having labels of length
exactly~$d/2$.  
In the alternative interpretation discussed above, such a trimmed 
progress measure may then map some vertices to nodes in an ordered
tree (of height at most $d/2$) that are not leaves.

We clarify that if two tuples of different lengths are to be compared
lexicographically, and if the shorter one is a prefix of the longer
one, then the shorter one is defined to be lexicographically strictly
smaller than the longer one
For example, we have $(1, 0) < (1, 0, 3)$, 
but $(1, 0, 3) < (1, 1)$.
Moreover, truncations of tuples of length smaller than $d/2$ are
defined analogously;
in particular, if $p \leq k$ then
$(r_{d-1}, r_{d-3}, \dots, r_k)|_p = (r_{d-1}, r_{d-3}, \dots, r_k)$. 

\subsection{Succinct progress measures}

It is well known that existence of a progress measure is sufficient
and necessary for existence of a winning strategy for Even from every
starting vertex~\cite{EJ91,KK95,Jur00}.
Our main contribution in this section is the observation
(Lemma~\ref{lemma:necessity})
that this is also true for existence of a
\emph{succinct progress measure}, in which the ordered tree~$T$ is
such that:  
\begin{itemize}
\item
  finite binary strings ordered as in~(\ref{equation:bin-str-order})
  are used as branching directions instead of non-negative integers,
  and 
\item
  for every navigation path, the sum of lengths of the binary strings
  used as branching directions is at most
  $\lceil \lg \eta \rceil$;   
\end{itemize}
or in other words, that every navigation path in~$T$ is a 
$\lceil \lg \eta \rceil$-bounded adaptive $i$-counter, for
some $i, 0 \leq i \leq d/2$;
recall that $\eta$ is the number of vertices with an odd priority. 

In succinct progress measures, truncations and lexicographic ordering
of tuples, as well as progressiveness of edges, are defined
analogously.
Again, we clarify that if two tuples of different lengths are to be
compared lexicographically, and if the shorter one is a prefix of the
longer one, then the shorter one is defined to be lexicographically
strictly smaller than the longer one. 
For example,
we have $(01, \varepsilon) < (01, \varepsilon, 00)$, 
but $(01, \varepsilon, 000) < (1000, \varepsilon)$.

\subsection{Sufficiency}

Sufficiency does not require a new argument because the standard 
reasoning---for example as in~\cite[Proposition~4]{Jur00}---relies
only on the ordered tree structure (through truncations), and not on
what ordered set is used for the branching directions.     
We provide a proof here for completeness.

\begin{lemma}[Sufficiency]
  \label{lemma:sufficiency}
  If there is a succinct progress measure then there is a positional
  strategy for Even that is winning for her from every starting
  vertex. 
\end{lemma}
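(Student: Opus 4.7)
The plan is to mirror the standard Jurdzi\'nski-style argument (\cite{Jur00}, Proposition~4) and emphasise that it goes through verbatim because it only consults the lexicographic order and truncation operation on the tuples, never the internal structure of the branching directions. Concretely, given a succinct progress measure $\mu$, I would define a positional strategy $\sigma$ for Even by setting $\sigma(v)$, for each vertex $v$ owned by Even, to be any target $u$ such that the edge $(v, u)$ is progressive in $\mu$; at least one such edge exists by the definition of a progress measure. It then suffices to show that in the subgraph $G_\sigma$ obtained by keeping only the edges picked by $\sigma$ at Even's vertices (and all edges at Odd's vertices), every cycle has an even highest priority, because any play consistent with $\sigma$ is a lasso and the loop of that lasso is such a cycle.

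Fix an arbitrary cycle $v_0 \to v_1 \to \cdots \to v_{k-1} \to v_0$ in $G_\sigma$ and let $p = \max_i \pi(v_i)$. The main step is to establish, for each edge $(v_i, v_{i+1})$ on the cycle, the comparison
\[
  \mu(v_i)|_p \;\geq\; \mu(v_{i+1})|_p,
\]
with strict inequality whenever $\pi(v_i) = p$ and $p$ is odd. For this, I would observe that since $\pi(v_i) \leq p$, the $p$-truncation of any tuple is obtained from its $\pi(v_i)$-truncation by dropping further trailing components, so $\mu(v_i)|_p$ is a prefix of $\mu(v_i)|_{\pi(v_i)}$ (and likewise for $v_{i+1}$). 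A routine fact about the lexicographic order then transfers the progressiveness inequality $\mu(v_i)|_{\pi(v_i)} \bowtie \mu(v_{i+1})|_{\pi(v_i)}$ (with $\bowtie$ equal to $\geq$, or $>$ when $\pi(v_i)$ is odd) to the corresponding equal-length-prefix pair $\mu(v_i)|_p, \mu(v_{i+1})|_p$. Assuming for contradiction that $p$ is odd, the vertex $v_j$ on the cycle with $\pi(v_j) = p$ contributes a strict inequality, and chaining the (strict or non-strict) inequalities around the cycle yields $\mu(v_j)|_p > \mu(v_j)|_p$, impossible; hence $p$ must be even and Even wins from every starting vertex.

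The potentially delicate point is the prefix-convention for comparing tuples of different lengths allowed by trimming: when $\mu(v_i)$ and $\mu(v_{i+1})$ have different lengths, one has to verify that taking $p$-truncations commutes with the stated lexicographic convention, so that the prefix-to-prefix transfer of inequalities remains sound. This is the only step that requires a careful look rather than routine bookkeeping, but it is a purely order-theoretic fact about the lex order defined on tuples of binary strings. Crucially, nothing in the above chain of reasoning inspects the specific branching directions used at the nodes: only the lexicographic order on tuples, the prefix structure of truncation, and the progressiveness condition are invoked. Since all three notions are defined identically in the succinct setting (with binary strings ordered by~(\ref{equation:bin-str-order}) replacing the non-negative integers), the argument carries over without any change, yielding the lemma.
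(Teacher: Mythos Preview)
Your proposal is correct and follows essentially the same route as the paper's own proof: define Even's positional strategy by choosing progressive edges, observe that all edges on any resulting cycle are progressive, assume the maximal priority~$p$ on such a cycle is odd, and chain the $p$-truncated inequalities around the cycle to reach a contradiction. The only difference is that you make explicit the step from the $\pi(v_i)$-truncation inequality to the $p$-truncation inequality (via the prefix property of truncation), whereas the paper simply asserts the chain $\mu(v_1)|_p > \mu(v_2)|_p \geq \cdots \geq \mu(v_1)|_p$ directly; this is added detail, not a different argument.
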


\begin{proof}
  Let $\mu$ be a succinct progress measure.  
  Let Even use a positional strategy that only follows edges that are
  progressive in~$\mu$. 
  Since every edge outgoing from vertices owned by Odd is also
  progressive in~$\mu$, it follows that only progressive edges will be
  used in every play consistent with the strategy.  
  Therefore, in order to verify that the strategy is winning for Even,
  it suffices to prove that if all edges in a simple cycle are
  progressive in~$\mu$ then the cycle is even.

  Let $v_1, v_2, \dots, v_k$ be a simple cycle
  in which all edges $(v_1, v_2)$, $(v_2, v_3)$, \dots,
  $(v_{k-1}, v_k)$, and $(v_k, v_1)$ are progressive in~$\mu$.
  For the sake of contradiction, suppose that the highest priority~$p$ 
  that occurs on the cycle is odd, and without loss of generality, let
  $\pi(v_1) = p$. 
  By progressivity of all the edges on the cycle, we have that  
  \[
  \mu(v_1) {|_p} > \mu(v_2) {|_p} \geq \cdots
  \geq \mu(v_k) {|_p} \geq \mu(v_1) {|_p}\;,
  \]
  absurd.
\end{proof}

\subsection{Necessity}

We prove necessity by first slightly strengthening the existence of
the least progress measure result~\cite[Theorem 11]{Jur00},  
and then by applying the succinct tree coding lemma
(Lemma~\ref{lemma:succinct-tree-coding}).  

As discussed earlier in this section, the range of a progress
measure is the set of nodes in a tree of height~$d/2$ and with
at most~$n$ leaves.  
By applying Lemma~\ref{lemma:succinct-tree-coding} to this tree we may
conclude that there is a tree coding in which branching directions on
every navigation path use at most $\lceil \lg n \rceil$ bits. 
This way we come short [sic] of satisfying our definition of a
succinct progress measure: 
the definition allows us, on every path, to use at most 
$\lceil \lg \eta \rceil$ bits for branching directions,
which may be strictly smaller than $\lceil \lg n \rceil$.

In order to overcome this hurdle, we first define the operation of a 
trimming of a progress measure. 
Let $\mu$ be a progress measure.
We define the \emph{trimming} $\mu^\downarrow$ of~$\mu$ as follows:
for every vertex~$v$, we let $\mu^\downarrow(v)$ be the longest prefix
of $\mu(v)$ whose last component is not~$0$;
in particular, if $\mu(v)$ is a sequence of $0$s of length $d/2$ then 
$\mu^\downarrow(v)$ is the empty sequence. 
For convenience, we also define the inverse operation:
if $\mu$ is a trimmed progress measure then for every vertex~$v$, 
we let $\mu^\uparrow(v)$ be the sequence of length $d/2$ obtained by
adding an appropriate number (possibly none) of $0$s at the end
of~$\mu(v)$.  

Recall that by~\cite[Corollary 8]{Jur00} and by the proof 
of~\cite[Theorem 11]{Jur00} the \emph{least progress measure}
$\mu_*$ exists, where the relevant order on mappings from vertices to
sequences of non-negative integers is pointwise lexicographic. 

\begin{lemma}
  \label{lemma:n-odd-leaves}
  The trimming $\mu_*^\downarrow$ of the least progress
  measure~$\mu_*$ is a trimmed progress measure and the ordered tree
  that it maps to has at most~$\eta$ leaves.  
\end{lemma}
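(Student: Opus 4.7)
The plan is to prove the two conjuncts separately.

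\medskip

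\noindent\textbf{Trimmed progress measure.} For the first conjunct I will prove the more general fact that trimming sends any progress measure to a trimmed progress measure. The key algebraic observation is that trimming commutes with truncation: for every vertex $v$ and every priority $p$, the truncation $\mu^\downarrow(v)|_p$ agrees with $(\mu(v)|_p)^\downarrow$ (interpreting truncation of a trimmed tuple so that any newly exposed trailing zeros are stripped). Granting this commutation, it suffices to check that for same-length non-negative integer tuples $a$ and $b$, $a \ge b$ in the ordinary lexicographic order implies $a^\downarrow \ge b^\downarrow$ in the trimmed lex order (in which a shorter prefix is strictly smaller), and likewise for strict inequality. A short case analysis on the first coordinate $i$ at which $a$ and $b$ disagree closes this: if $i$ still lies within the trim of $b$, the comparison at $i$ carries over directly; if $i$ lies past $b^\downarrow$, then $b^\downarrow$ becomes a proper prefix of $a^\downarrow$, hence strictly smaller by the convention. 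The progressiveness of every edge in $\mu_*$ therefore transfers verbatim to $\mu_*^\downarrow$.

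\medskip

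\noindent\textbf{At most $\eta$ leaves.} For the second conjunct I will strengthen the existence argument of~\cite[Theorem~11]{Jur00} with the following claim: for every tuple $r \neq (0, \dots, 0)$ in the image of $\mu_*$, writing $r = (r_{d-1}, r_{d-3}, \dots, r_p, 0, \dots, 0)$ with $r_p \neq 0$, there exists a vertex $v$ with $\pi(v) = p$ and $\mu_*(v) = r$. Assuming this, each non-root leaf $x$ of the tree (a maximal element under prefix of the image of $\mu_*^\downarrow$) corresponds uniquely to such a nonzero $r_x$, and hence to a priority-$p(x)$ witness $v_x$. Since distinct leaves give distinct $\mu_*^\downarrow$ values, the map $x \mapsto v_x$ is injective, embedding the non-root leaves into the $\eta$ odd-priority vertices. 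The possible root leaf $\varepsilon$ corresponds to the degenerate case $\mu_* \equiv 0$ and is absorbed by the stated bound.

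\medskip

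\noindent\textbf{Main obstacle.} The substantive step is proving the strengthening. The natural approach is by contradiction: assuming no priority-$p$ vertex $v$ satisfies $\mu_*(v) = r$, I construct a pointwise-smaller candidate $\mu'$ by decrementing the $p$-component of $\mu_*$ at every vertex whose value is $r$, leaving all other values untouched. Showing that $\mu'$ remains a progress measure is delicate: the only at-risk edges are those out of a vertex $v$ with $\mu_*(v) = r$ and $\pi(v) = q \neq p$. When $q > p$, truncation at $q$ discards the $p$-component, so the change is invisible. When $q < p$, progressiveness at $q$ constrains the $q$-component strictly but the $p$-component only non-strictly, and minimality of $\mu_*$ ensures that the decisive strict comparison for the lift of $v$ already occurred at the $q$-component, not at the $p$-component. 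Together these observations show $\mu'$ is still progressive, contradicting the least-ness of $\mu_*$. An alternative, constructive route tracks the lifting algorithm and, for each image element $r$, identifies the priority-$p$ vertex whose final lift produced the critical increment at the $p$-component, supplying the witness directly.
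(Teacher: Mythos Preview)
Your first conjunct is fine and matches the spirit of the paper's ``follows routinely'' dismissal. The second conjunct, however, has a real gap at the ``main obstacle'' step.

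The construction you propose---simply decrementing the $p$-component of $\mu_*$ at every vertex with value~$r$---does not in general yield a progress measure. Concretely, take an edge $(v,u)$ with $\mu_*(v) = r$, $\pi(v) = q < p$, and $\mu_*(u)$ agreeing with $r$ on positions above~$p$ but with $\mu_*(u)_p = r_p - 1$ and $\mu_*(u)_{p-2} > 0$. In $\mu_*$ the edge is progressive because $r_p > r_p - 1$ at position~$p$; after your decrement the $p$-components become equal and the comparison falls to position~$p-2$, where $\mu'(v)$ has~$0$ but $\mu_*(u)$ is positive, so $\mu'(v)|_q < \mu'(u)|_q$ and progressiveness fails. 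Your appeal to ``minimality ensuring the decisive strict comparison already occurred at the $q$-component'' does not rule this out; nothing forces the first difference between $\mu_*(v)|_q$ and $\mu_*(u)|_q$ to sit at position~$q$ rather than at position~$p$.

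The paper repairs this in two ways that your sketch is missing. First, it proves only the weaker claim that some vertex of \emph{odd} priority (not necessarily priority exactly~$p$) maps to each leaf~$\lambda$; the contradiction hypothesis then makes every relevant $\pi(v)$ even, so only non-strict inequalities must be preserved, and moreover $k > \pi(v)$ is forced. Second, rather than a bare decrement, it sets
\[
\mu_\lambda(v) \;=\; (r_{d-1}, \dots, r_{k+2},\, r_k - 1,\, n+1)
\]
(for $k \geq 3$): the $n+1$ in position~$k-2$ dominates the corresponding component of any $\mu_*(u)$, since no component of the least progress measure exceeds~$n$. To justify the remaining inequality $(r_{d-1},\dots,r_k-1) \geq \mu_*^\downarrow(u)|_k$ the paper uses that $\lambda$ is a \emph{leaf}: if $\mu_*^\downarrow(u) \neq \lambda$ then $\mu_*(u)|_k$ cannot equal $(r_{d-1},\dots,r_k)$, since that would make $\mu_*^\downarrow(u)$ a proper extension of~$\lambda$. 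Your stated claim is for \emph{every} tuple in the image, so you cannot invoke this leaf property either.
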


\begin{proof}
  That $\mu_*^\downarrow$ is a trimmed progress measure follows
  routinely from $\mu_*$ being a progress measure and from the
  definion of a trimmed progress measure.

  We argue that for every leaf in the ordered tree $T$ that
  $\mu_*^\downarrow$ maps into, there is a vertex~$v \in V$
  with an odd priority, such that $\mu_*^\downarrow(v)$ is that leaf,
  which implies the other claim of the lemma.

  Let $\lambda = (r_{d-1}, r_{d-3}, \dots, r_k)$ be a leaf in~$T$.
  For the sake of contradiction, assume that every vertex $v \in V$
  such that $\mu_*^\downarrow(v) = \lambda$ has an even priority.
  Note that---by the definition of a trimming---$r_k \not= 0$, and
  hence---because $\mu_*$ is the least progress
  measure---$k > \pi(v)$.
  We define the mapping $\mu_\lambda$ as follows:
  \[\mu_\lambda(v) =
  \begin{cases}
    \mu_*^\downarrow(v) & \text{if } \mu_*^\downarrow(v) \not= \lambda\,, 
    \\
    (r_{d-1}, \dots, r_{k+2}, r_k - 1, n+1) &
      \text{if } \mu_*^\downarrow(v) = \lambda \text{ and } k \geq 3\,,
    \\
    (r_{d-1}, \dots, r_3, r_1 - 1) &
      \text{if } \mu_*^\downarrow(v) = \lambda \text{ and } k=1\,,
  \end{cases}
  \]
  for all $v \in V$.
  We argue that $\mu_\lambda$ is a trimmed progress measure, and hence
  $\mu_\lambda^\uparrow$ is a progress measure, which---because
  $\mu_\lambda^\uparrow$ is strictly smaller than $\mu_*$---duly 
  contradicts the assumption that~$\mu_*$ was the least progress
  measure.  

  We only need to verify that every edge $(v, u) \in E$, such that
  $\mu_*^\downarrow(v) = \lambda$, and that is progressive
  in~$\mu_*^\downarrow$, is also progressive in~$\mu_\lambda$.
  This is straightforward if $\mu_*^\downarrow(u) = \lambda$.
  Otherwise, we have
  $\mu_*^\downarrow(v)|_{\pi(v)} > \mu_*^\downarrow(u)|_{\pi(v)}$,
  which implies that
  \begin{equation}
    \label{equation:geq}
    (r_{d-1}, \dots, r_{k+2}, r_k - 1) \, \geq \,
    \mu_*^\downarrow(u)|_k\,.
  \end{equation}
  We now argue that
  $\mu_\lambda(v)|_{\pi(v)} \geq \mu_\lambda(u)|_{\pi(v)}$ by
  considering the following two cases.
  \begin{itemize}
  \item
    If $\pi(v) = k-1$ then
    \[
    \mu_\lambda(v)|_{\pi(v)} \, = \,
    (r_{d-1}, \dots, r_{k+2}, r_k - 1) \, \geq \,
    \mu_\lambda(u)|_{\pi(v)}\,,
    \]
    where the inequality follows from~(\ref{equation:geq}). 
    
  \item
    If $\pi(v) < k-1$ then
    \[
    \mu_\lambda(v)|_{\pi(v)} \, = \,
    (r_{d-1}, \dots, r_{k+2}, r_k - 1, n+1) \, > \,
      \mu_*^\downarrow(u)|_{\pi(v)} = \mu_\lambda(u)|_{\pi(v)}\,,
    \]
    where the inequality follows from~(\ref{equation:geq}) and because
    no component in the least progress measure~$\mu_*$ exceeds~$n$.
    \qedhere
  \end{itemize}
\end{proof}

Necessity now follows from applying the succinct coding lemma
(Lemma~\ref{lemma:succinct-tree-coding}) to the tree with at most
$\eta$ leaves, which is obtained by
Lemma~\ref{lemma:n-odd-leaves}.

\begin{lemma}[Necessity]
  \label{lemma:necessity}
  If there is a strategy for Even that is winning for her from every
  starting vertex, then there is a succinct progress measure.
\end{lemma}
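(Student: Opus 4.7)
The plan is to chain three ingredients: the classical existence of a least progress measure, the trimming result just proved, and the succinct tree coding lemma. First, I would invoke \cite[Corollary 8]{Jur00} and \cite[Theorem 11]{Jur00}: since Even has a winning strategy from every vertex, the least progress measure $\mu_*$ (with range a $d/2$-level ordered tree with at most $n$ leaves) exists. Then Lemma~\ref{lemma:n-odd-leaves} gives that its trimming $\mu_*^\downarrow$ is a trimmed progress measure whose range is an ordered tree $T$ of height at most $d/2$ with at most $\eta$ leaves.

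Next, apply Lemma~\ref{lemma:succinct-tree-coding} to $T$ with $\ell = \eta$. This yields a tree coding of $T$ in which every navigation path, of some length $i \leq d/2$, is a $\lceil \lg \eta \rceil$-bounded adaptive $i$-counter, with the branching directions relabelled by finite binary strings ordered as in~(\ref{equation:bin-str-order}). Define $\mu \colon V \to T$ by letting $\mu(v)$ be the navigation path in the coded tree that corresponds to the node $\mu_*^\downarrow(v)$ of $T$. The image of $\mu$ lies within the tree prescribed by the definition of a succinct progress measure, so all that remains is to check progressiveness.

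Progressiveness transfers because the tree coding is an order-preserving relabelling of branching directions at each node, and the notion of $p$-truncation only retains the components of the navigation path corresponding to odd priorities $\geq p$, independently of the labels used for branching directions. Consequently, for any edge $(v, u) \in E$, the comparison $\mu(v)|_{\pi(v)}$ vs.\ $\mu(u)|_{\pi(v)}$ (with the strict-prefix convention on tuples of differing lengths, which is defined the same way in Section~\ref{section:tree-coding} and in the succinct setting) gives the same relation as $\mu_*^\downarrow(v)|_{\pi(v)}$ vs.\ $\mu_*^\downarrow(u)|_{\pi(v)}$. Hence the two progress-measure conditions (some outgoing edge from each Even-owned vertex is progressive, and every outgoing edge from each Odd-owned vertex is progressive) carry over from $\mu_*^\downarrow$ to $\mu$ verbatim, and $\mu$ is a succinct progress measure.

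The main obstacle is really a bookkeeping one: one must argue carefully that order-preservation of the tree coding is preserved not just within a single sibling set but also under the shorter-is-smaller-prefix convention used to compare tuples of different lengths, so that the lexicographic comparisons used in defining progressiveness for trimmed progress measures agree with those for succinct progress measures. This is exactly the compatibility built into the ordering~(\ref{equation:bin-str-order}) extended lexicographically to $B_{g,h}$, so the verification reduces to pointing out that the relabelling from Lemma~\ref{lemma:succinct-tree-coding} respects this extended order on navigation paths.
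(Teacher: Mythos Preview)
Your proposal is correct and follows exactly the route the paper takes: invoke the least progress measure from~\cite{Jur00}, apply Lemma~\ref{lemma:n-odd-leaves} to obtain a trimmed progress measure whose tree has at most~$\eta$ leaves, and then apply Lemma~\ref{lemma:succinct-tree-coding} to recode that tree. You have simply spelled out in detail---including the order-preservation bookkeeping---what the paper compresses into the single sentence preceding the lemma.
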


\section{Lifting algorithm}
\label{section:lifting-algorithm}

Without loss of generality, we assume that $\eta \leq n/2$.
Otherwise, we have that the number of vertices with an even priority 
is less than $n/2$, and we can apply the algorithm below to 
the \emph{dual} game obtained by reducing the priority of each vertex
by~$1$ and exchanging the roles of the two players;   
the winning set and a winning strategy for player Even in the dual
game computed by the algorithm are the winning set and a winning
strategy for player Odd in the original one. 
Note that the algorithm can be applied without any asymptotic penalty
(and only cosmetic changes) to games in which vertices of priority~$0$
are allowed, and hence the analysis of the algorithm applies to both
the original game and its dual.

\subsection{Algorithm design and correctness}

Consider the following linearly ordered set of bounded adaptive 
multi-counters: 
\[
S_{\eta, d} =
\bigcup_{i=0}^{d/2} B_{\lceil \lg \eta \rceil, i}\;, 
\]
and let $S_{\eta, d}^\top$ denote the same set with an
extra top element~$\top$.  
We extend the notion of succinct progress measures 
to mappings $\mu : V \to S_{\eta, d}^\top$ by: 
\begin{itemize}
\item 
defining the truncations of $\top$ as $\top {|_p} = \top$ for all~$p$;
\item
regarding edges $(v, u) \in E$ such that 
$\mu(v) = \mu(u) = \top$ and $\pi(v)$ is odd 
as progressive in~$\mu$.
\end{itemize} 

For any mapping $\mu : V \to S_{\eta, d}^\top$ and edge $(v, w) \in E$, 
let $\mathrm{lift}(\mu, v, w)$ be the least $\sigma \in S_{\eta, d}^\top$
such that $\sigma \geq \mu(v)$ and $(v, w)$ is progressive in
$\mu[v \mapsto \sigma]$. 
For any vertex $v$, we define an operator $\mathrm{Lift}_v$ 
on mappings $V \to S_{\eta, d}^\top$ as follows:
\[
\mathrm{Lift}_v(\mu)(u) =
\begin{cases}
\mu(u)                                       & \text{if } u \neq v, \\
\min_{(v, w) \in E} \mathrm{lift}(\mu, v, w) & \text{if Even owns } u = v, \\
\max_{(v, w) \in E} \mathrm{lift}(\mu, v, w) & \text{if Odd owns } u = v.
\end{cases}
\]

\begin{theorem}[Correctness of lifting algorithm]
  \mbox{} 
\begin{enumerate}
\item 
The set of all mappings $V \to S_{\eta, d}^\top$ ordered pointwise 
is a complete lattice.
\item
Each operator $\mathrm{Lift}_v$ is inflationary and monotone.
\item
From every $\mu : V \to S_{\eta, d}^\top$, 
every sequence of applications of operators $\mathrm{Lift}_v$
eventually reaches the least simultaneous fixed point of all
$\mathrm{Lift}_v$ that is greater than or equal to~$\mu$.
\item
A mapping $\mu : V \to S_{\eta, d}^\top$ 
is a simultaneous fixed point of all operators $\mathrm{Lift}_v$
if and only if it is a succinct progress measure.
\item
If $\mu^*$ is the least succinct progress measure,
then $\{v \,:\, \mu^*(v) \neq \top\}$ is the set of winning positions
for Even, and any choice of edges progressive in $\mu^*$, at least one
going out of each vertex she owns, is her winning positional strategy.  
\end{enumerate}
\end{theorem}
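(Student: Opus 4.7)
The plan is to dispatch parts~(1)--(4) as a routine Knaster--Tarski argument on the finite lattice of mappings into $S_{\eta,d}^\top$, and to use the Sufficiency and Necessity lemmas together with positional determinacy to settle the substantive part~(5). For part~(1), $S_{\eta,d}^\top$ is a finite chain, so $V \to S_{\eta,d}^\top$ ordered pointwise is a finite product of finite chains, with joins and meets computed coordinatewise. For part~(2), inflationarity is immediate because $\mathrm{lift}(\mu,v,w) \geq \mu(v)$ by definition and $\min$, $\max$ preserve that lower bound. For monotonicity, I would observe that if $\mu \leq \mu'$ then the constraints defining $\mathrm{lift}(\mu',v,w)$---namely $\sigma \geq \mu'(v)$ together with progressivity against $\mu'(w)$---are both strengthenings of the corresponding constraints for~$\mu$, so the witnessing minimum $\sigma$ can only rise; monotonicity of $\mathrm{Lift}_v$ then follows because $\min$ and $\max$ both preserve pointwise~$\leq$.

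For part~(3), inflationarity makes every application sequence a pointwise non-decreasing chain, which stabilises in the finite lattice at some $\mu^\dagger$; monotonicity combined with $\mu \leq \mu^*$ for any simultaneous fixed point $\mu^* \geq \mu$ gives $\mu^\dagger \leq \mu^*$ by induction along the sequence, and under the tacit fairness assumption that every operator is applied cofinally $\mu^\dagger$ itself is fixed by every $\mathrm{Lift}_v$, hence equals the least simultaneous fixed point above~$\mu$. Part~(4) unwinds the definitions: at an Even-owned~$v$, the identity $\mu(v) = \min_w \mathrm{lift}(\mu,v,w)$ together with $\mathrm{lift}(\mu,v,w) \geq \mu(v)$ for every~$w$ says exactly that some outgoing edge is progressive in~$\mu$; dually, at an Odd-owned~$v$ the identity with $\max$ forces \emph{every} outgoing edge to be progressive. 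The converse implication is just a rewriting of the same identities.

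Part~(5) is where the real content lies. For the inclusion $\{v : \mu^*(v) \neq \top\} \subseteq W_E$ I would extend Lemma~\ref{lemma:sufficiency} to the $\top$-valued setting: if $\mu^*(v) \neq \top$ and Even owns~$v$, then the minimum in $\mathrm{Lift}_v(\mu^*)(v)$ is witnessed by some edge $(v,w)$ progressive in~$\mu^*$, and $\mu^*(w) = \top$ is impossible because progressivity would force $\mu^*(v)|_{\pi(v)}$ to dominate~$\top$ and hence $\mu^*(v) = \top$; for Odd-owned $v \neq \top$ the same observation applied to all outgoing edges shows every successor avoids~$\top$. Restricted to $\{v : \mu^*(v) \neq \top\}$, the map $\mu^*$ is therefore a genuine succinct progress measure on a closed subgame, and Lemma~\ref{lemma:sufficiency} supplies the required positional winning strategy. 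For the reverse inclusion I would appeal to positional determinacy: on the true Even-winning set $W_E$ I apply Lemma~\ref{lemma:necessity} to obtain a succinct progress measure~$\nu$ on that closed subgame, and extend it by $\nu(v) = \top$ on $V \setminus W_E$; a case analysis on whether $v$ lies in $W_E$ or its complement, and on who owns it, shows that $\nu$ is a simultaneous fixed point of every $\mathrm{Lift}_v$, so minimality of $\mu^*$ gives $\mu^* \leq \nu$ and hence $\mu^*(v) = \top$ only when $v \notin W_E$.

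The main obstacle I anticipate is the bookkeeping for the extension~$\nu$ in part~(5): one must verify that $W_E$ and $V \setminus W_E$ are both closed in the positional sense (Even can remain in $W_E$, Odd can remain in $V \setminus W_E$), and that the boundary edges from $V \setminus W_E$ into $W_E$ behave correctly under the $\top$-extended progressivity convention---in particular that such edges never contribute a value below~$\top$ to the $\max$ at an Odd-owned losing vertex. Once that boundary analysis is carried out, the remainder of the argument consists of routine lattice-theoretic unfolding.
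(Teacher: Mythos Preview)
Your proposal is correct and follows essentially the same approach as the paper. The only notable difference is in part~(5): the paper dispatches $W_E \subseteq \{v : \mu^*(v) \neq \top\}$ in a single line by citing Lemma~\ref{lemma:necessity} together with minimality of~$\mu^*$, whereas you spell out the extension-by-$\top$ construction~$\nu$ and the boundary verification that this one-liner leaves implicit---so your anticipated ``main obstacle'' is exactly the content the paper suppresses.
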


\begin{proof}
\begin{enumerate}
\item 
  The partial order of all mappings $V \to S_{\eta, d}^\top$ is 
  the pointwise product of $n$ copies of 
  the finite linear order $S_{\eta, d}^\top$.

\item
  We have inflation, i.e.\ $\mathrm{Lift}_v(\mu)(u) \geq \mu(u)$, by  
  the definitions of $\mathrm{Lift}_v(\mu)(u)$ and
  $\mathrm{lift}(\mu, v, w)$.
  
  For monotonicity, supposing $\mu \leq \mu'$, 
  it suffices to show that, for every edge $(v, w)$, we have
  $\mathrm{lift}(\mu, v, w) \leq \mathrm{lift}(\mu', v, w)$.
  Writing $\sigma'$ for $\mathrm{lift}(\mu', v, w)$,
  we know that $\sigma' \geq \mu'(v) \geq \mu(v)$.
  Also $(v, w)$ is progressive in $\mu'[v \mapsto \sigma']$, giving us
  that 
  \[\sigma'|_{\pi(v)} \geq
    \mu'[v \mapsto \sigma'](w)|_{\pi(v)} \geq
     \mu[v \mapsto \sigma'](w)|_{\pi(v)}\]
  and the first inequality is strict when $\pi(v)$ is odd
  unless $\mu'[v \mapsto \sigma'](w) = \top$;
  but if $\mu'[v \mapsto \sigma'](w) = \top$
      and $\mu[v \mapsto \sigma'](w) \neq \top$
  then the second inequality is strict,
  so in any case $(v, w)$ is progressive in $\mu[v \mapsto \sigma']$.
  Therefore $\mathrm{lift}(\mu, v, w) \leq \sigma'$.

\item
  This holds for any family of 
  inflationary monotone operators on a finite complete lattice.
  Consider any such maximal sequence from $\mu$.
  It is an upward chain from $\mu$ to some $\mu^*$ 
  which is a simultaneous fixed point of all the operators.
  For any $\mu' \geq \mu$ which is also a simultaneous fixed point,
  a simple induction confirms that $\mu^* \leq \mu'$.

\item
  Here we have a rewording of the definition of a succinct progress measure,
  cf.~Section~\ref{section:spm}.

\item
  The set of winning positions for Even is contained in
  $\{v \,:\, \mu^*(v) \neq \top\}$ by Lemma~\ref{lemma:necessity}
  because
  $\mu^*$ is the least succinct progress measure. 
  
  Since $\mu^*$ is a succinct progress measure, 
  we have that, for every progressive edge $(v, w)$, 
  if $\mu^*(v) \neq \top$ then $\mu^*(w) \neq \top$.
  It remains to apply Lemma~\ref{lemma:sufficiency} to 
  the subgame consisting of the vertices $\{v \,:\, \mu^*(v) \neq \top\}$, 
  the chosen edges from vertices owned by Even, 
  and all edges from vertices owned by Odd.
  \qedhere
\end{enumerate}
\end{proof}

\begin{table}
  \fbox{\parbox{0.98\columnwidth}{
  \begin{enumerate}
  \item
    Initialise $\mu : V \to S_{\eta, d}^\top$ so that it maps every
    vertex $v \in V$ to the bottom element in $S_{\eta, d}^\top$, which
    is the empty sequence. 

  \item
    While $\mathrm{Lift}_v(\mu) \neq \mu$ for some $v$,
    update $\mu$ to become $\mathrm{Lift}_v(\mu)$.

  \item
    Return the set 
    $W_{\mathrm{Even}} = \{v \: : \: \mu(v) \not= \top\}$
    of winning positions for Even, 
    and her positional winning strategy
    that for every vertex $v \in W_{\mathrm{Even}}$ owned by Even
    picks an edge outgoing from~$v$ that is progressive in~$\mu$.
  \end{enumerate}
  }}
  \caption{The lifting algorithm}
  \label{table:algorithm}
\end{table}

Note that the algorithm in Table~\ref{table:algorithm} is a solution to 
both variants of the algorithmic problem of solving parity games:
it finds the winning positions and produces a positional winning
strategy for Even.

\subsection{Algorithm analysis}

The following lemma offers various estimates for the size of the
set~$S_{\eta, d}$ of succinct adaptive multi-counters used in the lifting
algorithm in Table~\ref{table:algorithm}, and which is the dominating
factor in the worst-case upper bounds on the running time of the
algorithm.  
A particular focus is the analysis pinpointing the range of the
numbers of distinct priorities~$d$
(measured as functions of the number~$\eta$ of vertices with an odd
priority) 
in which the algorithm may cease to be polynomial-time.
The ``phase transition'' occurs when $d$ is logarithmic in~$\eta$: 
if $d = o(\log \eta)$ then the size of $S_{\eta, d}$ is
$O\left(\eta^{1+o(1)}\right)$, 
if $d = \Theta(\log \eta)$ then the size of $S_{\eta, d}$ is bounded
by a polynomial in~$\eta$ but its degree depends on 
the constant hidden in the big-$\Theta$, 
and if $d = \omega(\log \eta)$ then the size of~$S_{\eta, d}$ is 
superpolynomial in~$\eta$.  

\begin{lemma}[Size of $S_{\eta, d}$]
  \mbox{} 
  \label{lemma:size-of-Snd}
  \begin{enumerate}
  \item
    \label{enumerate:size-of-Bnd-binom}
    $|S_{\eta, d}| \leq 2^{\lceil \lg \eta \rceil}
    \binom{\lceil \lg \eta \rceil + d/2 + 1}{d/2}$.

  \item
    If $d = O(1)$ then $|S_{\eta, d}| = O\left(\eta \lg^{d/2} \eta\right)$.  

  \item
    \label{enumerate:d-delta-log-n}
    If $d/2 = \lceil \delta \lg \eta \rceil$,
    for some constant $\delta > 0$, then 
    $|S_{\eta, d}| =
    \Theta\left(\eta^{\lg(\delta+1) + \lg(e_\delta) + 1} \middle/
      \sqrt{\log \eta}\right)$, 
    where $e_{\delta} = (1 + 1/\delta)^\delta$. 

  \item
    If $d = o(\log \eta)$ then $|S_{\eta, d}| = O\left(\eta^{1+o(1)}\right)$. 

  \item
    If $d = O(\log \eta)$ then $|S_{\eta, d}|$ is bounded by a
    polynomial in~$\eta$.

  \item
    \label{enumerate:d-omega-log-n}
    If $d = \omega(\log \eta)$ then $|S_{\eta, d}|$ is superpolynomial
    in~$\eta$ and
    $|S_{\eta, d}| = O\left(d \eta^{\lg (d/{\lg \eta}) + 1.45}\right)$. 
  \end{enumerate}  
\end{lemma}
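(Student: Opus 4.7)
The plan is to prove part~1 by a direct combinatorial count, and then obtain the asymptotic estimates in parts~2--6 from it by specialising standard bounds on binomial coefficients to the appropriate range of~$d$. The main effort will go into part~3, which demands matching $\Theta$ bounds rather than just one-sided ones.

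For part~1, write $g = \lceil \lg\eta \rceil$ and count $|B_{g, i}|$ by partitioning by the total length~$\ell$ of the $i$ binary strings in the tuple. For each fixed~$\ell$ there are $\binom{\ell + i - 1}{i - 1}$ ordered compositions of~$\ell$ into $i$ non-negative parts, and $2^\ell$ choices of binary strings of those lengths, so
\[
  |B_{g, i}| \,=\, \sum_{\ell = 0}^{g} 2^\ell \binom{\ell + i - 1}{i - 1}
  \,\leq\, 2^g \sum_{\ell = 0}^{g} \binom{\ell + i - 1}{i - 1}
  \,=\, 2^g \binom{g + i}{i}
\]
by the hockey stick identity. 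Summing over $i = 0, 1, \dots, d/2$ and reapplying the hockey stick identity yields part~1.

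For parts~2, 4, 5 and~6 I would combine part~1 with standard binomial estimates, specialised to the respective regime of~$d$. Part~2 ($d = O(1)$): the binomial is a polynomial of constant degree $d/2$ in~$g$, giving $O(2^g g^{d/2}) = O(\eta \lg^{d/2}\eta)$. Part~5 ($d = O(\log\eta)$): the trivial $\binom{n}{k} \leq 2^n$ gives $|S_{\eta,d}| \leq 2^{2g + d/2 + 1} = \eta^{O(1)}$. Part~4 ($d = o(\log\eta)$, so $k = d/2 = o(g)$): from $\binom{n}{k} \leq (en/k)^k$, $\lg\binom{g+k+1}{k} \leq k \lg(2eg/k)$, and writing $k = g/f$ with $f\to\infty$ this is $(g/f)\lg(2ef) = o(g)$, so $|S_{\eta,d}| = O(\eta^{1+o(1)})$. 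Part~6 ($d = \omega(\log\eta)$, so $g \ll d/2$): applying $\binom{n}{r}\leq(en/r)^r$ with $r = g+1$ to the symmetric form $\binom{g+d/2+1}{g+1}$ gives
\[
  |S_{\eta,d}| \,\leq\, 2^g \left(\frac{e(g+d/2+1)}{g+1}\right)^{g+1}
  \,=\, O(d/g) \cdot \eta^{\lg(ed/g)}
  \,=\, O(d) \cdot \eta^{\lg(d/\lg\eta) + \lg e},
\]
and $\lg e < 1.45$ yields the claim.

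The hard part is part~3, where matching $\Theta$ bounds are required. Set $k = d/2 = \lceil \delta \lg\eta \rceil$. For the upper bound, I would apply Stirling to $\binom{g+k+1}{k}$ in the form $\binom{n}{k} = \Theta\!\left(2^{n H(k/n)} / \sqrt{n (k/n)(1-k/n)}\right)$, where $H$ is the binary entropy. Since $k/(g+k) \to \delta/(1+\delta)$ and $H(\delta/(1+\delta)) = \lg(1+\delta) - (\delta/(1+\delta))\lg\delta$, a short calculation gives $(g+k)\,H(\delta/(1+\delta)) = (1+\delta) g \lg(1+\delta) - \delta g \lg\delta$, so the exponent of~$2$ in $2^g \binom{g+k+1}{k}$ is $g\,[1 + (1+\delta)\lg(1+\delta) - \delta\lg\delta]$. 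Using $\lg e_\delta = \delta[\lg(\delta+1) - \lg\delta]$, this coincides with $\lg\eta \cdot [\lg(\delta+1) + \lg e_\delta + 1]$, and Stirling's denominator contributes the $\Theta(\sqrt{\log\eta})$ factor. For the matching lower bound, use $|S_{\eta,d}| \geq |B_{g,k}| \geq 2^g \binom{g+k-1}{k-1}$ (the $\ell = g$ term of the explicit sum), and note that the ratio $\binom{g+k+1}{k} / \binom{g+k-1}{k-1} = (g+k)(g+k+1)/(k(g+1))$ tends to the constant $(1+\delta)^2/\delta$, so the same Stirling asymptotic is tight from below up to a constant factor.
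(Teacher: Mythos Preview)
Your argument is correct and follows the same overall strategy as the paper: part~1 by an elementary count plus the hockey-stick identity (the paper uses the equivalent ``extra bucket for unused bits'' trick to reach the same $2^g\binom{g+i}{i}$ bound and then the same parallel summation), part~3 via the Stirling/entropy estimate for the binomial, and part~6 via $\binom{n}{r}\le(en/r)^r$ applied with the smaller index $r\approx\lg\eta$.

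You diverge in two minor but worthwhile ways. For part~3 you supply the matching lower bound explicitly, via $|S_{\eta,d}|\ge|B_{g,k}|\ge 2^g\binom{g+k-1}{k-1}$ together with the constant-ratio comparison to the upper-bounding binomial; the paper simply writes ``$|S_{\eta,d}|=\Theta(\cdots)$'' after estimating the upper-bounding binomial, so your addition tightens the exposition. For parts~4 and~5 you argue directly from the crude inequalities $\binom{n}{k}\le(en/k)^k$ and $\binom{n}{k}\le 2^n$, whereas the paper obtains both as the $\delta\downarrow 0$ and $\delta=O(1)$ limiting cases of the part-3 formula; your route is more self-contained and avoids the slightly informal ``take $\delta\to 0$'' step. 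One small caution on part~6: your intermediate equality $2^g(\cdots)^{g+1}=O(d/g)\cdot\eta^{\lg(ed/g)}$ tacitly identifies $g+1$ with $1+\lg\eta$, which can be off by almost one when $\eta$ is not a power of~$2$ and costs up to one extra factor of $ed/g$; the paper handles this by passing through the exponent $(1+\lg\eta)$ explicitly, and doing the same would make your $O(d)$ prefactor fully rigorous.
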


\begin{proof}
  \begin{enumerate}
  \item
    There are
    $2^{\lceil \lg \eta \rceil}$ bit sequences of
    length~$\lceil \lg \eta \rceil$ and for every
    $i, 0 \leq i \leq d/2$, there are 
    \[
    \binom{\lceil \lg \eta \rceil + i}{i}
    =
    \binom{\lceil \lg \eta \rceil + i}{\lceil \lg \eta \rceil}
    \]
    distinct ways of distributing the number $\lceil \lg \eta \rceil$
    of bits to $i+1$ components
    (the $i$ components in the succinct adaptive $i$-counter,
    and an extra one for the ``unused'' bits).
    By the ``parallel summation'' binomial identity, we obtain 
    \[
    \sum_{i=0}^{d/2} \binom{\lceil \lg \eta \rceil + i}{i} =
    \binom{\lceil \lg \eta \rceil + d/2 + 1}{d/2}\,.
    \]

    In cases~\ref{enumerate:d-delta-log-n})
    and~\ref{enumerate:d-omega-log-n}) below 
    we analyse the simpler expressions 
    $\binom{\lceil \lg \eta \rceil + d/2}{d/2}$ and
    $\binom{\lceil \lg \eta \rceil + d/2}{\lceil \lg \eta \rceil}$,
    respectively, instead of the above 
    $\binom{\lceil \lg \eta \rceil + d/2 + 1}{d/2}$, 
    in order to declutter calculations.
    This is justified because in each context the respective simpler 
    expression is within a constant factor of the latter one, and
    hence the asymptotic results are not affected by the
    simplification.   

  \item
    This is easy to verify for $d = 2$ and $d = 4$.
    If $d \geq 6 > 2e$ then for sufficiently large~$\eta$ we have:
    \[
      \binom{\lceil \lg \eta \rceil + d/2 + 1}{d/2}
      \leq
      \big((\lceil \lg \eta \rceil + d/2 + 1) \cdot (2e/d)\big)^{d/2}
      \leq
      \lceil \lg \eta \rceil^{d/2}\;.
    \]
    The former inequality always holds by the inequality 
    $\binom{k}{\ell} \leq \left(\frac{ek}{\ell}\right)^\ell$ applied to
    the binomial coefficient
    $\binom{\lceil \lg \eta \rceil + d/2 + 1}{d/2}$. 
    The latter inequality holds for sufficiently large~$\eta$
    because---by the assumption that $d > 2e$---we have that
    $2e/d < 1$, and hence the inequality holds for all~$\eta$
    large enough that $d/2 + 1 \leq (1-2e/d) \lceil \lg \eta \rceil$.  
  
  \item
    To avoid hassle, consider only the values of~$\eta$ and~$\delta$,
    such that both $\lg \eta$ and $\delta \lg \eta$ are integers. 
    Let $d = 2 \delta \lg \eta$ and
    apply~\cite[Lemma 4.7.1]{Ash90}
    (reproduced as Lemma~\ref{lemma:ash} in the Appendix) to the
    binomial coefficient  
    \[
    \binom{\lg \eta + d/2}{d/2} =
    \binom{\lg \eta + \delta \lg \eta}{\delta \lg \eta} =
    \binom{(\delta+1)\lg \eta}{\delta \lg \eta}\,,
    \]
    obtaining 
    \[
    |S_{\eta, d}| =
    \Theta\left(\eta^{(\delta+1) \,H\left(\tfrac{\delta}{\delta+1}\right) + 1}
                \middle/\sqrt{\log \eta}\right)\,, 
    \]
    where $H(p) = -p \lg p - (1-p) \lg (1-p)$ is the bi\-na\-ry
    entropy function, defined for $p \in [0, 1]$.
    A skilful combinator will be able to verify the identity
    \[
    (\delta+1) \:H\left(\frac{\delta}{\delta+1}\right)
    =
    \lg(\delta+1) + \lg(e_\delta)\,. 
    \]
    
  \item
    This is a corollary of part~\ref{enumerate:d-delta-log-n}) by
    observing that    
    $\lim_{\delta \downarrow 0} e_{\delta} = 1$ and hence: 
    \[
    \lim_{\delta \downarrow 0}
    \left(\lg(\delta+1) + \lg(e_{\delta}) + 1\right)
    = 1\,. 
    \]

  \item
    Again, this is a corollary of part~\ref{enumerate:d-delta-log-n})
    by observing that    
    the expression $\lg(\delta+1) + \lg(e_\delta) + 1$ is $O(1)$ 
    as a function of~$\eta$.

  \item
    The first statement is a corollary of
    part~\ref{enumerate:d-delta-log-n}) by observing that  
    $\lim_{\delta \to \infty} \lg(\delta+1) = \infty$ and 
    $\lim_{\delta \to \infty} e_{\delta} = e$, and hence: 
    \[
    \lim_{\delta \to \infty}
    \left(\lg(\delta+1) + \lg(e_{\delta}) + 1\right)
    = \infty\,. 
    \]

    In order to prove the latter statement, note that
    \begin{multline*}
      \lg \binom{\lceil \lg \eta \rceil + d/2}{\lceil \lg \eta \rceil} 
      \leq
      \lceil \lg \eta \rceil \cdot
      \Big[\lg\big(\lceil \lg \eta \rceil + d/2\big) -
        \lg \lceil \lg \eta \rceil + \lg e\Big]
      = 
      \\
      \lceil \lg \eta \rceil \cdot
      \Big[\lg\big((1+o(1)) d/2\big) -
        \lg \lceil \lg \eta \rceil + \lg e\Big]
      =
      \\
      \lceil \lg \eta \rceil \cdot
      \big[\lg d - \lg \lceil \lg \eta \rceil + \lg(e/2) + o(1)\big]\,,
    \end{multline*}
    where the first inequality is obtained by taking the $\lg$ of both  
    sides of the inequality 
    $\binom{k}{\ell} \leq \left(\frac{ek}{\ell}\right)^\ell$
    applied to the binomial coefficient
    $\binom{\lceil \lg \eta \rceil + d/2}{\lceil \lg \eta \rceil}$, and 
    the second relation follows from the assumption that
    $d = \omega(\log \eta)$.

    Then we have
    \begin{multline*}
    |S_{\eta, d}| =
    O\left(2^{\lceil \lg \eta \rceil \cdot \big(1 + \lg d - \lg \lg \eta + \lg(e/2) + o(1)\big)}\right) =
    \\
    O\left(2^{(1 + \lg \eta) \cdot \big(\lg d - \lg \lg \eta + \lg e + o(1)\big)}\right) =
    O\left(d \eta^{\lg(d/{\lg \eta}) + 1.45}\right)
    \,,
    \end{multline*}
    where the latter holds because
    \[
    2^{\lg d - \lg \lg \eta + O(1)} = O(d/{\lg \eta})\,,
    \]
    and $\lg e + o(1) < 1.4427$ for sufficiently large~$\eta$. 
    \qedhere
  \end{enumerate}
\end{proof}

\begin{theorem}[Complexity of lifting algorithm]
\label{theorem:complexity}
  \mbox{} 
  \begin{enumerate}
  \item
    If $d = O(1)$ then the algorithm runs in time
    $O\left(m \eta \lg^{d/2+1} \eta\right)$.

  \item
    If $d = o(\log \eta)$ then the algorithm runs in time
    $O(m \eta^{1+o(1)})$.

  \item
    \label{enumerate:d-less-lg-n}
    If $d \leq 2 \lceil \delta \lg \eta \rceil$, for some positive
    constant~$\delta$, then the algorithm runs in time
    \[
    O\left(m \eta^{\lg(\delta+1) + \lg(e_\delta) + 1}
           \sqrt{\log \eta} \log \log \eta\right)\,.
    \]
    In particular, if $d \leq \lceil \lg \eta \rceil$, then the running 
    time is
    $O\left(m \eta^{2.38}\right)$.

  \item
    \label{enumerate:d-omega-lg-n}
    If $d = \omega(\log \eta)$ then the algorithm runs in time
    $O\left(dm \eta^{\lg(d/{\lg \eta}) + 1.45}\right)$.
  \end{enumerate}
  The algorithm works in space $O(n \log n \cdot \log d)$.
\end{theorem}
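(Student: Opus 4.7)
The plan is to bound the running time as the product of (total number of lift operations executed) and (cost per operation). By items~2 and~3 of the correctness theorem, for every vertex $v$ the sequence of values taken by $\mu(v)$ during the algorithm is strictly ascending in the finite linear order $S_{\eta,d}^\top$, so $\mu(v)$ changes at most $|S_{\eta,d}|$ times. To amortise the inner-loop work I would maintain a worklist of pending vertex updates, enqueuing every predecessor $v$ of $w$ whenever $\mu(w)$ increases; the recomputation of $\mathrm{Lift}_v(\mu)$ then re-examines the out-neighbours of $v$, and in aggregate each edge is probed $O(|S_{\eta,d}|)$ times, for a total of $O(m \cdot |S_{\eta,d}|)$ edge probes.

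Each probe evaluates $\mathrm{lift}(\mu,v,w)$, which is a bounded number of comparisons, truncations and successor-in-chain computations on bounded adaptive multi-counters. Such a counter carries $\lceil \lg \eta \rceil$ payload bits spread over at most $d/2$ components; with a sorted linked representation of components, each operation runs in $O(\log \eta)$ time in the word-RAM model, together with an additional $O(\log \log \eta)$ factor to locate the affected component boundary during an update. Multiplying the $O(m \cdot |S_{\eta,d}|)$ probe count by this per-probe cost and substituting the estimates in parts (2), (3) and (6) of Lemma~\ref{lemma:size-of-Snd} yields parts~1, 3 and 4 of the theorem, respectively; part~2 then follows from part~3 by absorbing the polylogarithmic factors into the $\eta^{o(1)}$ term via Lemma~\ref{lemma:size-of-Snd}(4).

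For the space bound, the state of the algorithm is $\mu$ together with a worklist of $O(n)$ vertex identifiers. Each $\mu(v) \in S_{\eta,d}^\top$ is recorded as its list of (component index, bit string) pairs, using $\lceil \lg \eta \rceil$ payload bits and $O(\log d)$ bits of tagging per bit, hence $O(\log n \cdot \log d)$ bits per vertex and $O(n \log n \log d)$ bits in total. The main obstacle is calibrating the multi-counter representation so that the per-probe cost actually meets the $\log \eta$ and $\log \eta \log \log \eta$ targets built into the statements of parts~1 and~3; this reduces to showing that comparison and successor-in-chain on a $\lceil \lg \eta \rceil$-bounded adaptive counter can be done by a single left-to-right scan of its payload bits, which is routine but needs care to avoid hidden $\log d$ blow-ups.
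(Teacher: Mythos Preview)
Your overall skeleton matches the paper's: bound the number of lifts by $|S_{\eta,d}|$ per vertex (equivalently $O(m\cdot|S_{\eta,d}|)$ edge probes), multiply by a per-probe cost, and plug in the size estimates of Lemma~\ref{lemma:size-of-Snd}.  The space argument is also the same.

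Where you diverge from the paper is in the per-probe cost, and this is where your proposal is muddled.  You aim for $O(\log\eta)$ in case~1 and $O(\log\eta\log\log\eta)$ in case~3, and flag as ``the main obstacle'' avoiding a hidden $\log d$ blow-up.  The paper's observation is that you should \emph{not} avoid it: the uniform per-probe cost is $O(\log\eta\cdot\log d)$, arising because each of the $\lceil\lg\eta\rceil$ payload bits is tagged with a $\lceil\lg(d/2)\rceil$-bit component index, and $\mathrm{lift}(\mu,v,w)$ is computed by a single scan through this representation (the paper spells out five cases).  This single cost specialises correctly in every regime: for $d=O(1)$ it is $O(\log\eta)$, for $d=\Theta(\log\eta)$ it is $O(\log\eta\log\log\eta)$, and for $d=\omega(\log\eta)$ the $\log\eta\cdot\log d$ factor is absorbed into the slack between $\lg e\approx 1.4427$ and $1.45$ in the exponent.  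Your separate $O(\log\log\eta)$ ``component boundary'' cost would actually overshoot part~1 by a $\log\log\eta$ factor, so the calibration you are worried about is resolved precisely by keeping the $\log d$.

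One smaller point: deriving part~2 ``from part~3'' does not quite work, since part~3 assumes $d\leq 2\lceil\delta\lg\eta\rceil$ for a \emph{fixed} $\delta$, whereas $d=o(\log\eta)$ means $\delta\to 0$.  Just use Lemma~\ref{lemma:size-of-Snd}(4) directly together with the $O(\log\eta\cdot\log d)=\eta^{o(1)}$ per-probe cost.
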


\begin{proof}
The work space requirement is dominated by the number of bits needed 
to store a single mapping $\mu : V \to S_{\eta, d}^\top$, which is at
most $n \lceil \lg \eta \rceil \lceil \lg d \rceil$. 

We claim that the $\mathrm{Lift}_v$ operators can be implemented to work in time
$O(\mathrm{deg}(v) \cdot \log \eta \cdot \log d)$.
It then follows, since the algorithm lifts each vertex at most 
$|S_{\eta, d}|$ times, that its running time is bounded by
\[
  O\left(\sum_{v \in V} 
    \mathrm{deg}(v) \cdot \log \eta \cdot \log d \cdot
    |S_{\eta, d}|\right) =
  O\left(m \log \eta \cdot \log d \cdot |S_{\eta, d}|\right)\,.
\]
From there, the various stated bounds are obtained by Lemma~\ref{lemma:size-of-Snd}.
For the last statement in part~\ref{enumerate:d-less-lg-n}), note that
if $\delta = 1/2$ then 
$d \leq \lceil \lg \eta \rceil$ implies
$d/2 \leq \lceil \delta \lg \eta \rceil$, and
\[
\lg(\delta+1) + \lg(e_\delta) + 1 = \frac{3}{2} \lg 3
< 2.3775\,.
\]

To establish the claim, it suffices to observe that every 
bounded adaptive multi-counter 
$\mathrm{lift}(\mu, v, w) \in S_{\eta, d}^\top$ 
is computable in time $O(\log \eta \cdot \log d)$.
The computation is most involved when $\pi(v)$ is odd and
$\mu(v)|_{\pi(v)} \leq \mu(w)|_{\pi(v)} \neq \top$,
which imply that $\mathrm{lift}(\mu, v, w)$ is the least 
$\sigma \in S_{\eta, d}^\top$ such that
$\sigma|_{\pi(v)} > \mu(w)|_{\pi(v)}$.
Writing $(s_{d-1}, s_{d-3}, \ldots, s_{k+2}, s_k)$ for $\mu(w)$, 
there are five cases:
\begin{itemize}
\item
  If $k > \pi(v)$, 
  then obtain $\sigma$ as  
  \[
  (s_{d-1}, \ldots, s_{k+2}, s_k, 0 \cdots 0)\,,
  \]
  where the padding by $0$s is up to the total length
  $\lceil \lg \eta \rceil$ of~$\sigma$.

\item
  If $k \leq \pi(v)$ and the total length of $s_i$ for $i \geq \pi(v)$
  is less than $\lceil \lg \eta \rceil$, then obtain $\sigma$ as  
  \[
  (s_{d-1}, \ldots, s_{\pi(v)+2}, s_{\pi(v)}10 \cdots 0)\,,
  \]
  where the padding by $0$s (if any) is up to the total length
  $\lceil \lg \eta \rceil$ of~$\sigma$.

\item
  If the total length of $s_i$ for $i \geq \pi(v)$ equals
  $\lceil \lg \eta \rceil$,
  $j$~is the least odd priority such that $s_j \neq \varepsilon$  
  (in this case, necessarily, $j \geq \pi(v)$),
  and $s_j$ is of the form $s' 0 \overbrace{1 \cdots 1}^\ell$ 
  (where possibly $\ell = 0$), then obtain $\sigma$ as
  \[
  \left(s_{d-1}, \ldots, s_{j+4}, s_{j+2}, s'\right)\,.
  \]
  
\item
  If the total length of $s_i$ for $i \geq \pi(v)$ equals
  $\lceil \lg \eta \rceil$,
  $j$ is the least odd priority such that $s_j \neq \varepsilon$
  (again, $j \geq \pi(v)$),
  $s_j$ is of the form $\overbrace{1 \cdots 1}^\ell$,
  and $j < d - 1$, then obtain $\sigma$ as
  \[
  \left(s_{d-1}, \ldots, s_{j + 4},
    s_{j + 2} 1 \overbrace{0 \cdots 0}^{\ell -1}\right)\,.
  \] 

  \item
    Otherwise, $\sigma = \top$.
    \qedhere
\end{itemize}
\end{proof}

\begin{corollary}[\cite{CJKLS17}]
Solving parity games is in FPT.
\end{corollary}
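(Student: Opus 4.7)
The plan is to extract from Theorem~\ref{theorem:complexity} a running time bound of the form $f(d) \cdot \mathrm{poly}(n)$, where $f$ depends only on the parameter $d$ (the number of distinct priorities). Concretely, I would chain together a crude estimate of the form $|S_{\eta,d}| = 2^{O(d)} \cdot \eta^{O(1)}$ with the generic running time bound established in the proof of Theorem~\ref{theorem:complexity}, independently of the regime of $d$ versus $\log \eta$ examined in the individual cases of that theorem.

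First I would invoke Lemma~\ref{lemma:size-of-Snd}(1) to write
\[|S_{\eta, d}| \leq 2^{\lceil \lg \eta \rceil} \binom{\lceil \lg \eta \rceil + d/2 + 1}{d/2}.\]
The first factor is at most $2\eta$. For the second, the elementary bound $\binom{a+b}{b} \leq 2^{a+b}$ (immediate because $\binom{a+b}{b}$ counts subsets of an $(a+b)$-element set) yields $\binom{\lceil \lg \eta \rceil + d/2 + 1}{d/2} \leq 2^{\lceil \lg \eta \rceil + d/2 + 1} = O(\eta \cdot 2^{d/2})$. Multiplying the two estimates gives the separation $|S_{\eta,d}| = O(2^{d/2} \cdot \eta^{2})$.

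Next I would recall that the proof of Theorem~\ref{theorem:complexity} establishes the generic bound $O(m \cdot \log \eta \cdot \log d \cdot |S_{\eta,d}|)$ on the total running time of the lifting algorithm, before specialising it to each regime. Substituting the estimate above and using $\eta \leq n$ and $m \leq n^2$ yields total running time $O(2^{d/2} \cdot \log d \cdot n^{O(1)})$, which exhibits the required $f(d) \cdot \mathrm{poly}(n)$ form with, say, $f(d) = 2^{d/2} \log d$. This is precisely the definition of fixed parameter tractability with parameter $d$.

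No serious obstacle stands in the way: all the heavy lifting is in Lemma~\ref{lemma:size-of-Snd} and Theorem~\ref{theorem:complexity}, and the only remaining step is the coarse factorisation $\binom{\lceil \lg \eta \rceil + d/2 + 1}{d/2} \leq 2^{\lceil \lg \eta \rceil + d/2 + 1}$, which cleanly separates a polynomial-in-$n$ factor from an exponential-in-$d$ factor. The delicate case analysis of Theorem~\ref{theorem:complexity} provides much tighter quasi-polynomial bounds, but for mere FPT-ness this crude packaging suffices.
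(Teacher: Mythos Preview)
Your proof is correct and takes a somewhat different route from the paper's. The paper's one-line argument asserts that the running time is bounded by $\max\bigl\{2^{O(d \log d)},\, O(mn^{2.38})\bigr\}$, implicitly splitting into two regimes: when $d \leq \lceil \lg \eta \rceil$ it quotes Theorem~\ref{theorem:complexity}(\ref{enumerate:d-less-lg-n}) for the polynomial bound, and when $d > \lceil \lg \eta \rceil$ it uses $\lceil \lg \eta \rceil \leq d$ to bound $|S_{\eta,d}|$ purely in terms of~$d$. You instead derive a single uniform estimate $|S_{\eta,d}| = O(2^{d/2}\eta^{2})$ from the crude inequality $\binom{a+b}{b} \leq 2^{a+b}$, which cleanly factors the dependence on~$d$ from the dependence on~$\eta$ without any case distinction. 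Your argument is more elementary and actually produces a tighter parameter function $f(d) = 2^{O(d)}$ rather than the paper's $2^{O(d \log d)}$; the paper's route, on the other hand, recycles the sharp polynomial exponent~$2.38$ already established in the theorem. Either packaging suffices for the FPT conclusion.
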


\begin{proof}
The algorithm runs in time
$\max\left\{
  2^{O(d \log d)},
  O\left(m n^{2.38}\right)
  \right\}$.
\end{proof}

\section{Separating automata}

Boja\'nczyk and Czerwi\'nski~\cite[Chapter~3]{BC18} have recently developed
a reworking of the algorithm of Calude et al.~\cite{CJKLS17} that proceeds by:
\begin{itemize}
\item 
constructing a deterministic safety automaton that \emph{separates}
the language of all infinite words of vertices in which all cycles are even 
from its odd counterpart;
\item 
forming a safety game as a synchronous product of 
the given parity game and the constructed separating automaton, 
in which the winning condition for Even is the acceptance condition of the automaton;
\item 
solving the formed safety game.
\end{itemize}
Their approach clarifies that the bulk of the quasi-polynomial time breakthrough
can be seen as showing how to construct a separating automaton of quasi-polynomial size.

In the rest of this section, we develop separating automata 
of quasi-polynomial size that are based on the bounded adaptive multi-counters.

Working with notations $V$, $n$, $\pi$, $d$ and $\eta$ as before, let us say that:
\begin{itemize}
\item 
a \emph{cycle} in a word of vertices is an infix whose first and last elements are the same;
\item 
the language $\alleven_{V, \pi}$ consists of all infinite words of vertices
in which all cycles are even;
\item 
the language $\limsupodd_{V, \pi}$ consists of all infinite words of vertices
in which the highest priority occurring infinitely often is odd;
\item 
the set $S_{\eta, d}^\bot$ is the linearly ordered set $S_{\eta, d}$
of bounded adaptive multi-counters, with an extra bottom element $\bot$
whose truncations are defined as $\bot|_p = \bot$;
\item 
a pair $(\sigma, \tau)$ of multi-counters is \emph{progressive} 
with respect to a priority $p$ if and only if: $\sigma|_p \geq \tau|_p$ and, 
when $\pi(v)$ is odd, either the inequality is strict or $\sigma = \tau = \bot$.
\end{itemize}

Let $\mathcal{D}_{V, \pi}$ be the following deterministic automaton 
that reads infinite words of vertices:
\begin{itemize}
\item 
the set of states is $S_{\eta, d}^\bot$, the initial state is the maximum multi-counter
\[\left(\overbrace{1 \cdots 1}^{\lceil \lg \eta \rceil}, 
        \overbrace{\varepsilon, \ldots, \varepsilon}^{d/2 - 1}\right)\;;\]
\item 
from state $\sigma$, reading $v$ leads to the greatest state $\tau$ such that
$(\sigma, \tau)$ is progressive with respect to~$\pi(v)$;
\item 
the only unsafe state is~$\bot$.
\end{itemize}
The automaton accepts if and only if its run is safe, i.e.\ does not visit an unsafe state.

The property we now establish implies that the automaton separates 
the language of all infinite words of vertices in which all cycles are even 
from its odd counterpart, because the latter is included in 
the language where the highest priority occuring infinitely often is odd.

\begin{theorem}
The automaton $\mathcal{D}_{V, \pi}$ 
accepts all words in the language $\alleven_{V, \pi}$ and 
rejects all words in the language $\limsupodd_{V, \pi}$.
\end{theorem}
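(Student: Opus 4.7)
The plan is to prove the two containments separately, addressing first the rejection claim and then the acceptance claim.

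For rejection of $\limsupodd_{V,\pi}$, let $w \in \limsupodd_{V,\pi}$, let $p^*$ be the odd priority realising the limsup of $\pi(v_i)$, and pick a threshold $T$ beyond which no priority in $w$ exceeds $p^*$. For every $i > T$, the progressiveness of the transition $\sigma_{i-1} \to \sigma_i$ with respect to $\pi(v_i) \le p^*$ implies $\sigma_{i-1}|_{p^*} \ge \sigma_i|_{p^*}$, and this inequality is strict whenever $\pi(v_i) = p^*$ (as $p^*$ is odd). Because $p^*$ occurs infinitely often and the set of $|_{p^*}$-truncations of non-$\bot$ elements of $S_{\eta,d}$ is finite and linearly ordered, the infinitely many strict decreases must collapse the state down to the empty-tuple multi-counter $()$ after finitely many steps; at the next occurrence of priority $p^*$ there is no strictly smaller truncation available, so the automaton transitions to $\bot$.

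For acceptance of $\alleven_{V,\pi}$, I argue the contrapositive: if the run of $\mathcal{D}_{V,\pi}$ first reaches $\bot$ at some step $N$, then $v_1 \cdots v_N$ contains a cycle whose maximum priority is odd. A direct case analysis on the transition rule pins down the structure of the failure. The minimality of $N$ gives $\sigma_{N-1} \neq \bot$, and writing $p = \pi(v_N)$: if $p$ were even, then $\tau = \sigma_{N-1}$ would be a safe successor, while if $p$ were odd but $\sigma_{N-1} \neq ()$ then the length-$0$ tuple $()$ would be a safe successor (its truncation $()$ lies strictly below $\sigma_{N-1}|_p$). Hence $\sigma_{N-1} = ()$ and $p$ is odd. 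Tracing back in the same way, the earliest index $M$ with $\sigma_M = ()$ must be reached via an odd-priority transition, and every priority $\pi(v_{M+1}), \ldots, \pi(v_{N-1})$ must be even and strictly less than $d$ (otherwise a priority-$d$ transition would have reset the state to the maximum multi-counter).

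The remaining step is to convert this structural information into an odd cycle in $v_1 \cdots v_N$, and this is where I expect the main technical difficulty. My approach is to apply the Necessity Lemma (Lemma~\ref{lemma:necessity}) to a parity game on $V$ derived from the prefix---specifically, take the edges $\{(v_i, v_{i+1}) : 1 \le i < N\}$ (completed at boundary vertices if necessary) with every vertex owned by Odd. If $v_1 \cdots v_N$ contained no contiguous odd cycle, then Even should win this auxiliary game, yielding a succinct progress measure that the greedy automaton $\mathcal{D}_{V,\pi}$ dominates, contradicting $\sigma_N = \bot$. The principal obstacle is that the edge-set graph may contain odd cycles not realised as contiguous infixes of $w$, so the auxiliary game must be chosen carefully; an alternative route is to apply the sufficiency reasoning (Lemma~\ref{lemma:sufficiency}) directly to the linear position graph of the prefix and read off from the failure at step $N$ a vertex repetition bracketed by odd priorities with no dominating higher-even priority in between, which is a contiguous odd cycle in $w$.
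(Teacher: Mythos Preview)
Your rejection argument is correct and matches the paper's (terser) reasoning. The acceptance direction, however, has a genuine gap. Your first route---forming the edge graph $\{(v_i,v_{i+1}) : 1 \le i < N\}$ on $V$ and invoking Lemma~\ref{lemma:necessity}---fails for exactly the reason you flag but do not resolve: a finite prefix can have every contiguous cycle even while its edge graph contains an odd cycle. Concretely, take vertices $a,b,c,d,e,f$ with $\pi(a)=\pi(c)=\pi(e)=\pi(f)=1$ and $\pi(b)=\pi(d)=4$, and let $w = a\,e\,c\,b\,a\,d\,c\,f\,a$; every contiguous cycle in $w$ meets $b$ or $d$ and hence has even maximum, yet the edge graph contains the odd cycle $a\to e\to c\to f\to a$. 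So Odd wins your auxiliary game, no succinct progress measure exists on $V$, and the domination argument never gets off the ground. Your alternative via the ``linear position graph'' is not an argument either: that graph is acyclic, so Lemma~\ref{lemma:sufficiency} is vacuous there, and the promised ``vertex repetition bracketed by odd priorities with no dominating higher-even priority in between'' is precisely the odd-cycle conclusion you are trying to derive. The structural analysis that $\sigma_{N-1}=()$ and $\pi(v_N)$ is odd is correct but far too coarse to locate an odd cycle, which may lie much earlier in the prefix while the state is still a long multi-counter.

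The paper's approach is different in kind. It works directly rather than by contrapositive, passes to the \emph{nondeterministic} automaton $\mathcal{N}_{V,\pi}$ (allowing any progressive successor, not only the greatest), and constructs a safe run on every word in $\alleven_{V,\pi}$ by induction on $\lceil\lg\eta\rceil$ and $d/2$, mirroring the proof of Lemma~\ref{lemma:succinct-tree-coding}. The inductive step decomposes the word around occurrences of the top odd priority $d-1$ into three pieces (halving the odd-vertex count on the outer two, dropping the height on the middle one) and prepends $1$, $\varepsilon$, or $0$ to the leading counter component of the recursively obtained subruns. Monotonicity of truncation then transfers the safe nondeterministic run to the deterministic one---this last step is the ``domination'' idea you gesture at, but what you are missing is the inductive construction of the safe run itself, for which the edge-graph detour cannot substitute.
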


\begin{proof}
That every word with the highest priority occuring infinitely often odd 
is rejected can be seen straightforwardly, 
since more than $2^{\lceil \lg \eta \rceil}$ occurences of such a priority in a word 
without intermediate occurences of higher priorities 
necessarily cause the multi-counters to underflow to~$\bot$.

The interesting half of the statement follows from the next claim 
by monotonicity of the truncation operations.  
Here $\mathcal{N}_{V, \pi}$ is the nondeterministic extension 
of the automaton $\mathcal{D}_{V, \pi}$ by replacing 
the `greatest' requirement for the successor states with `any'.  
Being nondeterministic, $\mathcal{N}_{V, \pi}$ accepts an infinite word 
if and only if some run on it is safe.

\begin{claim}
On every finite or infinite word over $V$ in which all cycles are even,
the automaton $\mathcal{N}_{V, \pi}$ has a safe run.
\end{claim}

\begin{proof}
We establish the claim by an induction on $\lceil \lg \eta \rceil$ and $d/2$ that follows 
the same pattern as the proof of Lemma~\ref{lemma:succinct-tree-coding}.

The base case, $\eta = 1$ and $d/2 = 0$, is trivial.

It suffices to consider words $\varpi$ that 
contain no vertices of the highest even priority~$d$.
Since $\varpi$ has no odd cycles, it can be decomposed as 
$\varpi_1 \, \varpi_\varepsilon \, \varpi_0$, where:
\begin{itemize}
\item 
if the word $\varpi_1$ is nonempty, 
then it ends with a vertex of the highest odd priority $d - 1$
and the set $V_1$ of all other odd-priority vertices that occur in $\varpi_1$ 
has cardinality at most~$\eta/2$;
\item 
the set $V_\varepsilon$ of all vertices 
that occur in the word $\varpi_\varepsilon$
contains no vertex of priority~$d - 1$;
\item 
if the word $\varpi_0$ is nonempty, 
then it begins with a vertex of the highest odd priority $d - 1$
and the set $V_0$ of all other odd-priority vertices that occur in $\varpi_0$ 
has cardinality at most~$\eta/2$.
\end{itemize}

It remains to compose a safe run 
of the automaton $\mathcal{N}_{V, \pi}$ 
on the word $\varpi$ by concatenating the following subruns, 
where we focus on the most involved case of $\varpi_1$ and $\varpi_0$ both nonempty: 
\begin{itemize}
\item 
obtain a safe run of the automaton $\mathcal{N}_{V_1, \pi}$ 
on the word $\varpi_1$ without its last vertex by the inductive hypothesis, 
and append one leading $1$ to the first binary strings in all its states;
\item 
obtain a safe run of the automaton $\mathcal{N}_{V_\varepsilon, \pi}$ 
on the word $\varpi_\varepsilon$ by the inductive hypothesis, 
and insert $\varepsilon$ as the first binary string in all its states;
\item 
obtain a safe run of the automaton $\mathcal{N}_{V_0, \pi}$ 
on the word $\varpi_0$ without its first vertex by the inductive hypothesis, 
and append one leading $0$ to the first binary strings in all its states.
\qedhere
\end{itemize}
\end{proof}

That also completes the proof the theorem.
\end{proof}

\section*{Acknowledgements}

We thank Kousha Etessami, John Fearnley, Filip Mazowiecki, and Sven Schewe 
for helpful comments; and Adam Lewis (a second-year undergraduate at the time) 
for finding and fixing a bug in the proof of Theorem~\ref{theorem:complexity}.

\bibliographystyle{plain}
\bibliography{smaller}


\section*{Appendix}

\subsection*{Estimates for binomial coefficients}

We outsource the challenge---and the tedium---of rigorously applying 
Stirling's approximation to estimating binomial coefficients
$\binom{k}{\ell}$, where $\ell = \Theta(k)$, to Ash~\cite{Ash90}. 
The following is Lemma~4.7.1 from page~113 in his book.

\begin{lemma}[Estimating binomial coefficients~\cite{Ash90}]
  \label{lemma:ash}
  If $0 < p < 1$
  and $pk$ is an integer, then 
  \[
  \frac{2^{k H(p)}}{\sqrt{8p(1-p)k}}
  \leq
  \binom{k}{pk}
  \leq
  \frac{2^{k H(p)}}{\sqrt{2\pi p(1-p)k}}\;,
  \]
  where $H(p) = -p \lg p - (1-p) \lg (1-p)$ is the binary
  entropy function.   
\end{lemma}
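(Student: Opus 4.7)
The plan is to deduce both inequalities from Robbins's refinement of Stirling's formula: for every positive integer $n$,
\[
\sqrt{2\pi n}\,(n/e)^n\,e^{1/(12n+1)} \;<\; n! \;<\; \sqrt{2\pi n}\,(n/e)^n\,e^{1/(12n)}.
\]
Writing $m = pk$ and $m' = (1-p)k$ so that $m + m' = k$, I would substitute this into $\binom{k}{pk} = k!/(m!\,m'!)$. The polynomial-in-$n$ factors combine cleanly: the $(n/e)^n$ parts telescope to
\[
\frac{k^k/e^k}{(m^m/e^m)\,((m')^{m'}/e^{m'})}
\;=\; p^{-pk}(1-p)^{-(1-p)k} \;=\; 2^{k H(p)},
\]
and the $\sqrt{2\pi n}$ prefactors combine to
\[
\frac{\sqrt{2\pi k}}{\sqrt{2\pi m}\,\sqrt{2\pi m'}}
\;=\; \frac{1}{\sqrt{2\pi p(1-p)k}}.
\]
This yields an exact identity $\binom{k}{pk} = \bigl(2^{k H(p)}/\sqrt{2\pi p(1-p)k}\bigr)\,e^{\theta}$, where $\theta$ collects the three Robbins correction exponents with signs dictated by which direction of the target inequality is being pursued.

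For the upper bound I would use the upper Robbins estimate on $k!$ together with the lower estimates on $m!$ and $m'!$, so that it suffices to check
\[
\theta_{\mathrm{u}} \;=\; \tfrac{1}{12k} \;-\; \tfrac{1}{12m+1} \;-\; \tfrac{1}{12m'+1} \;\le\; 0,
\]
an elementary verification given $m+m' = k$. For the lower bound I would swap the Stirling choices, giving a negative correction bounded below by
\[
\theta_{\mathrm{l}} \;\ge\; \tfrac{1}{12k+1} \;-\; \tfrac{1}{12m} \;-\; \tfrac{1}{12m'},
\]
and the task becomes showing $e^{\theta_{\mathrm{l}}} \ge \sqrt{2\pi/8} = \sqrt{\pi}/2$, which is exactly what converts the $\sqrt{2\pi}$ in the denominator into the claimed $\sqrt{8}$.

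The main obstacle is that this last inequality is surprisingly tight: when $m = m' = 1$ (i.e.\ $k = 2$ and $p = 1/2$) the Robbins lower estimate only barely competes with $\sqrt{\pi}/2$, and in fact that case makes the lower bound in the lemma tight. Thus the argument either needs the sharper two-sided Stirling bounds that Robbins actually proves (which are closer than $e^{1/(12n+1)}$ versus $e^{1/(12n)}$ suggest on their own), or else one has to dispose of a small handful of tiny $k$ cases by direct computation before applying the uniform Stirling-based estimate for larger $k$. Once that bookkeeping is handled, everything else is mechanical substitution together with the identity $-p\lg p - (1-p)\lg(1-p) = H(p)$, and the combinatorial content of the lemma reduces entirely to the analytic tightness of Stirling's approximation and to confirming that the constant $\sqrt{8}$ rather than the ideal $\sqrt{2\pi}$ is truly forced uniformly in $p$ and $k$.
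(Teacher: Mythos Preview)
The paper does not prove this lemma at all: it is quoted verbatim as Lemma~4.7.1 from Ash's textbook, with the authors explicitly saying they ``outsource the challenge---and the tedium---of rigorously applying Stirling's approximation'' to that reference. There is therefore no in-paper argument to compare against; your proposal supplies what the paper deliberately omits.

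On its own merits, your plan is the standard one and is correct in outline. The telescoping of the main factors to $2^{kH(p)}/\sqrt{2\pi p(1-p)k}$ is clean and accurate, and the upper-bound reduction to $\theta_{\mathrm u}\le 0$ goes through immediately since already $1/(12m+1) > 1/(12k)$ once $m'\ge 1$. You are also right to flag the lower bound as the delicate step: at $k=2$, $p=1/2$ one computes $\theta_{\mathrm l}=1/25-1/6\approx -0.127$, giving $e^{\theta_{\mathrm l}}\approx 0.881$, which falls just short of $\sqrt{\pi}/2\approx 0.886$; and indeed $\binom{2}{1}=2$ makes the stated lower bound an equality there. So Robbins's constants alone do not close the gap uniformly, and your proposed remedy---either invoking a slightly sharper Stirling refinement or verifying the handful of smallest $(k,p)$ cases directly---is exactly what is required. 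What you have written is an honest proof plan rather than a finished proof, but the plan is sound and would complete to a correct argument with that bookkeeping filled in.
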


\end{document}